\newcommand{\bigoh}{\ensuremath{\mathcal{O}}}
\let\doendproof\endproof
\renewcommand\endproof{~\hfill\qed \doendproof}
\newcommand{\remove}[1]{}
\newcommand{\partition}{{\sc Partition}\xspace}
\newcommand{\upward}{{\sc Upward Planarity Testing}\xspace}
\begin{document}
	%
	%
	%
	\title{Upward Planarity Testing of \\Biconnected Outerplanar DAGs \\Solves Partition\thanks{Research partially supported by PRIN projects no.\ 2022ME9Z78 ``NextGRAAL: Next-generation algorithms for constrained GRAph visuALization'' and no.\  2022TS4Y3N ``EXPAND: scalable algorithms for EXPloratory Analyses of heterogeneous and dynamic Networked Data''.}}
	\author{ Fabrizio Frati
	}
	%
	\titlerunning{Upward Planarity of Outerplanar DAGs and Partition}
	%
	\institute{Roma Tre University, Rome, Italy, 	\email{fabrizio.frati@uniroma3.it}}
	
	\maketitle

\begin{abstract}
We show an $\bigoh(n)$-time reduction from the problem of testing whether a multiset of positive integers can be partitioned into two multisets so that the sum of the integers in each multiset is equal to $n/2$ to the problem of testing whether an $n$-vertex biconnected outerplanar DAG admits an upward planar drawing. This constitutes the first barrier to the existence of efficient algorithms for testing the upward planarity of DAGs with no large triconnected minor.    

We also show a result in the opposite direction. Suppose that partitioning a multiset of positive integers into two multisets so that the sum of the integers in each multiset is $n/2$ can be solved in $f(n)$ time. Let $G$ be an $n$-vertex biconnected outerplanar DAG and $e$ be an edge incident to the outer face of an outerplanar drawing of $G$. Then it can be tested in $\bigoh(f(n))$ time whether $G$ admits an upward planar drawing with $e$ on the outer face.
\end{abstract}

\section{Introduction}\label{le:intro}

An \emph{upward planar drawing} of a directed acyclic graph (DAG, for short) maps each vertex to a point of the plane and each edge to a Jordan arc monotonically increasing from the source to the sink of the edge, so that no two edges intersect, except at common end-vertices. Upward planarity is perhaps the most natural extension of planarity to directed graphs. As such, the problem of testing whether a DAG $G$ admits an upward planar drawing has received an enormous attention in the literature. While the problem was early shown to be NP-hard~\cite{gt-ccurpt-01}, it is known to be tractable in many important cases, among which: (i) if $G$ has a fixed combinatorial embedding, that the required upward planar drawing has to respect~\cite{bdl-udtd-94}; (ii) if $G$ has a single source~\cite{DBLP:journals/siamcomp/BertolazziBMT98,BrucknerHR19,HuttonL91,HuttonL96} or, more in general, a bounded number of sources~\cite{cdf-up-22}; (iii) if $G$ contains no large triconnected minor~\cite{DBLP:conf/gd/ChaplickGFGRS22,DBLP:journals/siamdm/DidimoGL09,Papakostas94}. The parameterized complexity of the problem has also been investigated~\cite{Chan04,cdf-up-22,DBLP:journals/siamdm/DidimoGL09,HealyL06}.

This paper focuses on \emph{outerplanar DAGs} -- fitting case~(iii) in the above list. These are DAGs whose underlying (undirected) graph admits an \emph{outerplanar drawing}, i.e., a planar drawing in which all the vertices are incident to the outer face. An $\bigoh(n^2)$-time algorithm for testing whether an $n$-vertex outerplanar DAG admits a (not necessarily outerplanar) upward planar drawing is known~\cite{Papakostas94}. Although we do not fully understand the details of the mentioned algorithm~\cite{Papakostas94}, an $\bigoh(n^2)$-time upward planarity testing algorithm is now known~\cite{DBLP:conf/gd/ChaplickGFGRS22} for all the $n$-vertex DAGs with no $K_4$-minor -- this is the class of \emph{directed partial $2$-trees}, which includes the one of the outerplanar DAGs. 

Can we design an upward planarity test for outerplanar DAGs running in linear time? We do not know. However, in this paper, we link the complexity of the problem to the one of a very famous problem in complexity theory. The \partition problem takes as input a multiset $\mathcal S$ of positive integers and asks whether $\mathcal S$ admits a partition into two multisets such that the sum of the elements in each multiset is the same\footnote{In the rest of the paper, we use the term ``set'' in place of ``multiset'', implicitly assuming that sets might contain multiple occurrences of the same element.}. \partition is one of the ``six basic NP-complete problems'' in the notorious book~\cite{DBLP:books/fm/GareyJ79} by Garey and Johnnson. In the same book, it is proved that the problem admits a pseudo-polynomial-time algorithm, that is, an algorithm which runs in polynomial time if each element, or equivalently the sum of the elements, is bounded by a polynomial function of the number of elements. Specifically, there is an algorithm based on dynamic programming that solves \partition in $\bigoh(k\cdot n)$ time, if we denote by $k$ the number of elements and by $n$ the sum of the elements in the set\footnote{In the literature, the number of elements is denoted by $n$ and their sum by $B$, however as will be evident soon, our notation is more convenient for this paper.}. As $k$ might be in $\Omega(n)$, the running time of the algorithm is $\bigoh(n^2)$, when expressed solely as a function of $n$. The time needed to solve \partition is nowadays known to be $\bigoh(n \log^2 n)$, due to a recent algorithm for the more general {\sc{Subset Sum}} problem~\cite{DBLP:journals/talg/KoiliarisX19}. 

Our main result is the following. 

\begin{theorem}\label{th:partition-to-upward}
	Let $\mathcal S=\{a_1,\dots,a_k\}$ be an instance of the \partition problem, where $\sum_{i=1}^k a_i=n$. It is possible to construct in $\bigoh(n)$ time an $\bigoh(n)$-vertex biconnected outerplanar DAG $G$ which is upward planar if and only if $\mathcal S$ is a positive instance of  \partition. Moreover, a partition of $\mathcal S$ into two sets $\mathcal S_1$ and $\mathcal S_2$ such that  $\sum_{a_i\in \mathcal S_1} a_i=\sum_{a_i\in \mathcal S_2} a_i=n/2$ can be constructed in $\bigoh(n)$ time from an upward plane embedding of  $G$.
\end{theorem}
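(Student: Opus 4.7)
The plan is to construct $G$ as a small ``frame'' together with one ``element gadget'' $H_i$ per value $a_i$, engineered so that upward plane embeddings of $G$ correspond bijectively to balanced partitions of $\mathcal S$. Because $G$ is required to be biconnected and outerplanar, its planar embedding is essentially unique up to reflection: the outer face is the Hamilton cycle of $G$, and the non-crossing chord structure is determined. The only combinatorial freedom lies in which internal face of the frame houses each element gadget, and it is exactly this bit of information that will encode the $\mathcal S_1/\mathcal S_2$ assignment.

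Concretely, the frame would be an outer cycle with source $s$ and sink $t$ at opposite positions, together with a small fixed chord structure that subdivides the interior into two faces $f_L$ and $f_R$. I would tune the frame so that, via the standard face/angle-count characterization of upward planarity of biconnected plane DAGs, each of $f_L$ and $f_R$ must absorb exactly $n/2$ units of additional combinatorial slack coming from the gadgets. For each $a_i$, the gadget $H_i$ would be a biconnected outerplanar DAG on $\Theta(a_i)$ vertices, attached to the frame along a single edge so as to keep $G$ biconnected and outerplanar, and designed so that, in any upward plane embedding of $G$, $H_i$ contributes exactly $a_i$ units of slack to whichever of $f_L, f_R$ contains it. The total number of vertices is then $\bigoh(\sum_i a_i)=\bigoh(n)$, and $G$ is clearly constructible in $\bigoh(n)$ time.

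Correctness reduces to a balance argument. The gadgets together carry $n$ units of slack, and the two frame faces require $n/2$ units each, so an upward plane embedding exists if and only if the gadgets split into two subfamilies of total weight $n/2$, i.e., iff $\mathcal S$ is a yes-instance of \partition. Given a partition $\mathcal S_1 \cup \mathcal S_2$, one embeds each $H_i$ in $f_L$ or $f_R$ according to its assigned part and verifies the angle-count condition on every face of the resulting embedding; given an upward plane embedding of $G$, one reads the partition off in $\bigoh(n)$ time by scanning each $H_i$ once and recording the face that contains it.

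The main obstacle is proving the rigidity of each gadget $H_i$: namely, that in every upward plane embedding of $G$, $H_i$ lies entirely inside one of $f_L, f_R$ and contributes exactly $a_i$ units to that face — never splitting its internal structure between the two faces and never contributing a different amount. Rigidity will require a case analysis over the bimodal orientations and embeddings compatible with $H_i$, and it is here that the absence of a $K_{2,3}$ minor in outerplanar graphs plays a decisive role, by sharply limiting the ways $H_i$'s internal structure can interact with the frame and forcing the binary left/right choice to be the only available degree of freedom. Once rigidity is established, the bijection between upward plane embeddings of $G$ and balanced partitions of $\mathcal S$, and hence the theorem, follows at once from the balance argument.
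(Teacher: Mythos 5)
Your high-level strategy --- two faces that must each absorb $n/2$ units of angle-count ``slack,'' with a weight-$a_i$ gadget per element that must be housed entirely in one of them --- is indeed the strategy of the paper (there the two bins are the inside of one long cycle $\mathcal C$ and the outer face, rather than two internal faces separated by a chord). However, the proposal rests on a false premise and leaves the crux unproved. The false premise: a biconnected outerplanar graph does \emph{not} have an essentially unique planar embedding, and an upward planar drawing of $G$ need not be outerplanar. Whitney-type embedding uniqueness requires 3-connectivity; in an outerplanar graph every 2-connected piece attached along an edge of a cycle can be flipped to either side of that edge, and the outer face of an upward planar drawing need not be the Hamilton cycle at all (the paper must explicitly handle the case where the outer face is one of the small gadget cycles and re-embed). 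So you cannot start from ``the embedding is rigid except for a left/right bit per gadget''; that rigidity is precisely what has to be engineered and proved.

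The genuine gap is that everything is deferred to the unproved ``rigidity'' of $H_i$: that in every upward plane embedding each gadget lies wholly on one side and contributes exactly $a_i$ units to that side. This is the entire content of the reduction, and it does not follow from the absence of a $K_{2,3}$ minor or from any general structural fact about outerplanar graphs. In the paper it is obtained from a concrete construction --- the gadget for $a_i$ is a subpath of $6a_i+2$ vertices of $\mathcal C$ with a small 4-cycle glued onto each of its $6a_i+1$ edges, oriented alternately --- together with a lemma showing that two consecutive 4-cycles sharing a non-switch vertex are forced to the same side of $\mathcal C$ by the angle conditions (UP0)--(UP3), and an exact count of the small and large angles each gadget contributes to the two big faces (a net of $6a_i$ to one face, with $+2$ offsets that cancel globally). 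Without exhibiting such a gadget, proving the side-forcing lemma, and verifying the exact angle counts, the balance argument cannot be carried out; as written, the proposal is a plan for a proof rather than a proof.
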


Theorem~\ref{th:partition-to-upward} provides evidence that an $\bigoh(n)$-time algorithm for testing the upward planarity of an $n$-vertex biconnected outerplanar DAG might be very hard to obtain, as it would imply an $\bigoh(n)$-time algorithm for \partition, which would be a ground-breaking result. We also remark that, due to Theorem~\ref{th:partition-to-upward}, a super-linear lower bound for \partition would imply the same lower bound for testing the upward planarity of a biconnected outerplanar DAG. The authors of this paper are unaware of any graph drawing problem that does not involve real numbers (e.g., a prescribed point set), that is not NP-hard, and whose complexity is super-linear. Hence, this would constitute an important step towards establishing polynomial lower bounds for the complexity of graph drawing problems.

We can also prove a reduction in the opposite direction, as in the following.

\begin{theorem}\label{th:upward-to-partition}
Suppose that an algorithm exists that solves an instance $\mathcal S$ of the \partition problem in $f(n)$ time, for some function $f(n)\in \Omega(n)$, where $n$ denotes the sum of the elements in $\mathcal S$. Then, for any $n$-vertex biconnected outerplanar DAG $G$ and for any edge $e$ incident to the outer face of an outerplanar drawing of $G$, it can be tested in $\bigoh(f(n))$ time whether $G$ admits an upward planar drawing in which $e$ is incident to the outer face.
\end{theorem}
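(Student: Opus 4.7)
\textbf{Proof plan for Theorem~\ref{th:upward-to-partition}.} The plan is to convert the upward planarity question, after the outer face is fixed by $e$, into a \emph{single} \partition instance of total weight $O(n)$, which the oracle then resolves. The overall approach is to invert the intuition behind Theorem~\ref{th:partition-to-upward}: since an instance of size $n$ there is encoded into a biconnected outerplanar DAG by combining simple balancing gadgets, a generic biconnected outerplanar DAG should have at most one nontrivial ``balancing point'' whose resolution is a single \partition instance; everywhere else the upward planar embedding should be either forced or locally decidable in linear time.

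First, I would use $e$ to fix the embedding. Since $G$ is biconnected and outerplanar, the outerplanar drawing containing $e$ on the outer face is unique up to reflection, so in $O(n)$ time I can compute the outer Hamiltonian cycle $C$, the list of chords together with their directions, and the weak dual tree $T$ of the internal faces. Next, I would perform a battery of $O(n)$-time necessary checks: (i)~on $C$ there is exactly one vertex $s$ whose two $C$-edges both go out and exactly one vertex $t$ whose two $C$-edges both come in (these must serve as the source/sink angles of the outer face, and they split $C$ into two directed $s$-to-$t$ paths $L$ and $R$); (ii)~at each vertex $v$, the incoming edges and the outgoing edges each form a contiguous arc in the cyclic embedding (otherwise no upward planar drawing with this embedding exists by the characterization of~\cite{bdl-udtd-94}); (iii)~for every internal face $f$, the angles at the vertices of $\partial f$ admit at least one candidate source angle and one candidate sink angle. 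If any of these local tests fails, the answer is ``no''.

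After these reductions, the remaining combinatorial freedom should come down to choosing, for each internal face of $T$ that has more than one candidate source angle (equivalently, more than one candidate sink angle), \emph{which} candidate is the actual face-source and face-sink. I would process $T$ bottom-up: at each node that is ``rigid'' (its face has only one valid source/sink angle assignment compatible with the children already processed) the assignment is forced and in $O(\mathrm{size})$ time I absorb its contribution into its parent; at a ``flexible'' node, the decision amounts to distributing subtree weights between two sides of a chord. The plan is to show that the sequence of such flexible decisions can be encoded as: given integers $w_1,\dots,w_k$ of total value $O(n)$ accumulated from the subtree sizes attached at flexible faces, decide whether a subset sums to exactly half of a target $N=O(n)$. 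That is precisely \partition, so one oracle call of $f(n)$ time suffices, and the overall running time is $O(n)+f(n)=O(f(n))$, since $f(n)\in\Omega(n)$. Finally, the assignments produced from the oracle's answer determine a valid upward planar embedding, which can be certified and drawn in linear time by standard techniques.

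\textbf{Main obstacle.} The delicate part will be proving that all flexible decisions collapse into a \emph{single} \partition instance rather than a more general constraint satisfaction problem: a priori, different flexible faces could impose coupled choices of the form ``if face $f$ picks candidate $a$ then face $f'$ must pick candidate $b$''. The proof of correctness will therefore need a structural lemma guaranteeing that, on a biconnected outerplanar DAG, the interactions among flexible faces propagate along a single path of the weak dual tree and can be serialized into a one-dimensional balancing problem. Establishing this decoupling -- essentially the statement that the outerplanar/biconnected structure prevents nested balancing constraints -- is the heart of the argument, and it is naturally suggested (and guided) by the gadget structure of Theorem~\ref{th:partition-to-upward}, which I would mirror in reverse.
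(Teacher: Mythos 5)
Your overall framework---root the extended dual tree at the face containing $e$, process it bottom-up, and let the in/out flipping of components generate a balancing problem handed to the \partition oracle---matches the paper's. But the step you yourself flag as the ``main obstacle'' is a genuine gap, and the resolution you propose for it is the wrong one. The paper does \emph{not} collapse all flexible decisions into a single global \partition instance, and no structural lemma serializes them along one path of the dual tree: each internal face, with boundary cycle of length $k_s$, carries its \emph{own} balancing constraint, coming from Condition (UP3) applied to that cycle (Property $\mathcal P_2$ of Lemma~\ref{le:structure}: $\sum_i \rho_i=-2$), so the algorithm makes a separate $\bigoh(f(k_s))$ oracle call at every node $s$ of the dual tree, and these calls sum to $\bigoh(f(n))$. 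What decouples the faces is not serialization but a constant-size interface: for each child subgraph $G_i$ the only information propagated upward is its \emph{feasible set} of at most nine pairs $(\mu_i,\nu_i)\in\{-1,0,1\}^2$ recording the admissible angles at its two attachment vertices. A further nontrivial ingredient your plan omits is that, even with the feasible sets in hand, the per-face problem is a priori a coupled constraint-satisfaction problem; the paper must first show that the angle values $\mu_i,\nu_i$ can be fixed greedily without loss of generality (Lemmata~\ref{le:from-mu-to-nu} and~\ref{le:from-nu-to-mu}), and only the residual in/out assignment of the resulting ``bound'' blocks of components becomes a clean \partition instance.

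Two of your preliminary ``necessary checks'' are also not valid. The sought drawing is an upward \emph{planar} drawing, not an upward outerplanar one, so it need not respect the outerplane rotation: flipping a component to the other side of a chord changes the cyclic order of edges at its two attachment vertices, which invalidates your check~(ii) as stated. Check~(i) is false even for a bare cycle: every directed cycle is upward planar, with $p+1$ large and $p-1$ small angles on the outer face when it has $2p$ switches, so the outer Hamiltonian cycle may legitimately carry many vertices with both incident cycle edges outgoing (or both incoming); requiring exactly one of each would reject positive instances---indeed the gadget of Theorem~\ref{th:partition-to-upward} has linearly many such vertices on its cycle $\mathcal C$.
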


We make three remarks on Theorem~\ref{th:upward-to-partition}. First, together with the mentioned $\bigoh(n \log^2 n)$-time algorithm to solve the \partition problem~\cite{DBLP:journals/talg/KoiliarisX19}, Theorem~\ref{th:upward-to-partition} implies the following. 

\begin{theorem}\label{th:solving-outerplanar}
For any $n$-vertex biconnected outerplanar DAG $G$ and for any edge $e$ incident to the outer face of an outerplanar drawing of $G$, there exists an algorithm that tests in $\bigoh(n \log^2 n)$ time whether $G$ admits an upward planar drawing in which $e$ is incident to the outer face.
\end{theorem}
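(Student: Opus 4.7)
My plan is to derive Theorem~\ref{th:solving-outerplanar} as a direct corollary of Theorem~\ref{th:upward-to-partition} combined with the fastest known algorithm for the \partition problem. Concretely, I would instantiate the function $f$ in the hypothesis of Theorem~\ref{th:upward-to-partition} with $f(n) = n \log^2 n$, relying on the Koiliaris--Xu algorithm~\cite{DBLP:journals/talg/KoiliarisX19} for {\sc Subset Sum}, which, when specialized to a \partition instance whose elements sum to $n$, decides membership in $\bigoh(n \log^2 n)$ time. Since $n \log^2 n \in \Omega(n)$, this choice of $f$ fulfills the hypothesis of Theorem~\ref{th:upward-to-partition}.

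The second step is a one-line substitution. Given an arbitrary $n$-vertex biconnected outerplanar DAG $G$ and an edge $e$ incident to the outer face of an outerplanar drawing of $G$, Theorem~\ref{th:upward-to-partition} immediately yields an algorithm that decides whether $G$ admits an upward planar drawing with $e$ on the outer face in $\bigoh(f(n)) = \bigoh(n \log^2 n)$ time, which is precisely the bound stated in Theorem~\ref{th:solving-outerplanar}. No further arguments are needed once the two results are chained.

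Because Theorem~\ref{th:upward-to-partition} is invoked as a black box, I do not anticipate any real technical obstacle. The only subtlety worth highlighting is that the parameter $n$ has to play consistent roles across both statements: in Theorem~\ref{th:solving-outerplanar} it is the number of vertices of $G$, whereas in Theorem~\ref{th:upward-to-partition} the same $n$ must also bound the sum of the elements of every \partition instance that the reduction produces and feeds to the oracle. This matching is guaranteed by the phrasing of Theorem~\ref{th:upward-to-partition}, so the chaining is valid and the $\bigoh(n \log^2 n)$ bound follows with no hidden cost.
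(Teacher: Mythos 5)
Your proposal is correct and is exactly the paper's own derivation: the paper obtains Theorem~\ref{th:solving-outerplanar} by combining Theorem~\ref{th:upward-to-partition} with the $\bigoh(n\log^2 n)$-time Koiliaris--Xu algorithm for \partition, i.e., by setting $f(n)=n\log^2 n$. Your remark about $n$ playing consistent roles in both statements is the only point that needs checking, and it is indeed guaranteed by the phrasing of Theorem~\ref{th:upward-to-partition}.
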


Second, although its statement does not mention this explicitly, the proof of Theorem~\ref{th:partition-to-upward} shows that, given an instance $\mathcal S=\{a_1,\dots,a_k\}$ of \partition with $\sum_{i=1}^k a_i=n$, it is possible to construct in $\bigoh(n)$ time an $\bigoh(n)$-vertex biconnected outerplanar DAG $G$ such that, {\em for a certain edge $e$ incident to the outer face of an outerplanar drawing of $G$}, we have that $G$ admits an upward planar drawing {\em in which $e$ is incident to the outer face} if and only if $\mathcal S$ is a positive instance of  \partition. This, together with Theorem~\ref{th:upward-to-partition}, implies that the \partition problem is linear-time equivalent to the problem of testing whether a biconnected outerplanar DAG $G$ with a given edge $e$ incident to the outer face of an outerplanar drawing of $G$ admits an upward planar drawing in which $e$ is incident to the outer face. 

Third, the approach of solving an algorithmic graph drawing problem by first constraining a prescribed edge to be incident to the outer face, and by only later getting rid of that constraint, nicely interfaces with the decomposition of a biconnected planar graph into its triconnected components. This decomposition is usually expressed by a tree, namely the SPQR-tree for a biconnected planar graph, or the SPQ-tree for a biconnected partial $2$-tree, or the (extended) dual tree for a biconnected outerplanar graph; having a prescribed edge incident to the outer face corresponds to rooting the decomposition tree at some node, which then allows for a bottom-up computation on the tree. This approach has been successfully used for problems related to ours, namely testing the upward planarity of directed partial $2$-trees~\cite{DBLP:conf/gd/ChaplickGFGRS22} and testing the rectilinear planarity of outerplanar graphs~\cite{DBLP:journals/comgeo/Frati22}; for these two problems, removing the constraint about the prescribed edge on the outer face is doable without increasing the algorithm's running time, by using a strategy devised in~\cite{dlop-ood-20}.

{\bf Related results.} Fast approximation and exact algorithms for \partition are known~\cite{kk-dm-82,DBLP:books/daglib/0010031,DBLP:conf/ijcai/Korf11}, running in time exponential in $k$~\cite{DBLP:journals/jacm/HorowitzS74,k-caa-98,DBLP:conf/aips/KorfS13,DBLP:journals/siamcomp/SchroeppelS81}. See the PhD thesis by Schreiber~\cite{DBLP:phd/basesearch/Schreiber14} for more references. 

\partition is a special case of the {\sc {Subset Sum}} problem, whose input consists of a set $\mathcal S$ of $k$ positive integers that sum up to $n$, and of a target integer $t$. The problem asks whether there exists a subset of $\mathcal S$ whose elements sum up to $t$; note that, when $t=n/2$, this is exactly the \partition problem. A plethora of results about the {\sc {Subset Sum}} problem are known. Here we mention only a few of them. Let $k'$ and $m$ be the number of distinct elements in $\mathcal S$ and the value of the maximum element in $\mathcal S$, respectively. Pisinger~\cite{DBLP:journals/jal/Pisinger99} presented an $\bigoh(n m)$-time algorithm for the {\sc {Subset Sum}} problem. Koiliaris and Xu~\cite{DBLP:journals/talg/KoiliarisX19} presented an algorithm running in $\bigoh\left(\min\left \{\sqrt{k'}m \log^{5/2}m, m^{4/3}\log^2 m, n \log n \log (k' \log m)\right\}\right)$ time; note that the last of the three terms in the previous minimum gives an $\bigoh(n\log^2 n)$ upper bound, as $k'\in \bigoh(n)$ and $m\in \bigoh(n)$; this is the previously mentioned best-known upper bound for the \partition problem. Bringmann~\cite{DBLP:conf/soda/Bringmann17} presented a randomized algorithm running in $\bigoh(k+t\log t \log^3 (k/\delta)\log k)$ time for computing a set that contains every value $t'\leq t$ with probability larger than $1-\delta$.

The complexity of upward planarity testing is intertwined with the one of \emph{rectilinear planarity testing}, which asks whether a graph admits a planar drawing in which each edge is horizontal or vertical. The reductions proving NP-hardness for rectilinear planarity testing and upward planarity testing are similar~\cite{gt-ccurpt-01} and, like upward planarity testing, rectilinear planarity testing is tractable when the graph has a fixed combinatorial embedding~\cite{bkmnw-msms-17,t-eggmnb-87} and when it contains no large triconnected minor~\cite{blr-odie-16,dlv-sood-98,glm-sr-19}. Our results highlight a possible divergence between upward and rectilinear planarity testing, as the rectilinear planarity of an outerplanar graph can be tested in linear time~\cite{DBLP:journals/comgeo/Frati22}. 

\section{Preliminaries} \label{se:preliminaries}

A \emph{plane embedding} of a biconnected planar graph $G$ is an equivalence class of planar drawings of $G$, where two drawings $\Gamma_1$ and $\Gamma_2$ are equivalent if the clockwise order of the edges incident to each vertex is the same in $\Gamma_1$ and $\Gamma_2$, and the clockwise order of the vertices along the cycle delimiting the outer face is the same in $\Gamma_1$ and $\Gamma_2$. We often talk about \emph{faces of a plane embedding}, meaning faces of any planar drawing in that equivalence class. If the drawings in the class are outerplanar, we have an \emph{outerplane embedding}; this is actually unique, up to an inversion of all the orders defining it, for a biconnected outerplanar graph. We denote by $f_{\mathcal E}$ the outer face of a plane embedding $\mathcal E$.


Let $G$ be a DAG. A \emph{switch} of $G$ is a vertex that is a source or a sink. An upward planar drawing of $G$ determines an assignment of labels to the angles of the corresponding plane embedding, where an \emph{angle} $\alpha$ at a vertex $u$ in a face $f$ of a plane embedding represents an incidence of $u$ on $f$. In particular, $\alpha$ is \emph{flat} and has label $0$ if the edges delimiting it are one incoming into $u$ and one outgoing from $u$. Otherwise, $\alpha$ is a \emph{switch} angle and it is \emph{small} (with label $-1$) or \emph{large} (with label $1$) depending on whether the (geometric) angle at $f$ representing $\alpha$ is smaller or larger than $180^\circ$, respectively. An \emph{upward plane embedding} of $G$ is an equivalence class of upward planar drawings of $G$, where two drawings are equivalent if they determine  the same plane embedding $\mathcal E$ for $G$ and  the same label assignment for the angles of $\mathcal E$. The label assignments that enhance a plane embedding to an upward plane embedding are characterized as follows. 

\begin{theorem}[\cite{bdl-udtd-94,DBLP:journals/siamdm/DidimoGL09}]\label{th:upward-conditions}
	Let $G$ be a digraph with plane embedding  $\mathcal E$, and $\lambda$ be a label assignment for the angles of $\mathcal E$. Then $\mathcal E$ and $\lambda$ define an upward plane embedding of $G$ if and only if the following hold:
	\begin{enumerate}[]
		\item{\bf (UP0)} If $\alpha$ is a switch angle then $\alpha$ is small or large, otherwise it is flat.
		\item{\bf (UP1)} If $v$ is a switch vertex, the number of small, flat and large angles incident to $v$ is equal to $\deg(v)-1$, $0$, and $1$, respectively.
		\item{\bf (UP2)} If $v$ is a non-switch vertex, the number of small, flat and large angles incident to $v$ is equal to $\deg(v)-2$, $2$, and $0$, respectively.
		\item{\bf (UP3)} If $f$ is an internal face (the outer face) of $\mathcal E$, the number of small angles in $f$ is equal to the number of large angles in $f$ plus $2$ (resp.\ minus $2$).
	\end{enumerate}
\end{theorem}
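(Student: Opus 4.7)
The plan is to prove both directions of the biconditional. For the necessity direction, I fix an upward planar drawing $\Gamma$ realizing the plane embedding $\mathcal{E}$ and verify that the induced labeling satisfies UP0--UP3. UP0 is immediate from the definitions of flat and switch angles. For UP1 and UP2, I use the key geometric observation that, in any upward planar drawing, the incoming edges at a vertex $v$ appear as a consecutive block in the clockwise order around $v$, and so do the outgoing edges, because the horizontal line through $v$ separates the two directions. At a non-switch vertex there are therefore exactly two transitions between the two blocks, yielding two flat angles, with all remaining angles being small since only one angle at $v$ can geometrically exceed $180^\circ$. At a switch vertex all edges go in the same direction, so all angles are switch angles, and again exactly one of them is large.

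For UP3, I observe that UP1 and UP2 imply $s_v-\ell_v=\deg(v)-2$ at every vertex $v$, where $s_v$ and $\ell_v$ denote the numbers of small and large angles at $v$. Summing over vertices and re-indexing by faces gives $\sum_{f}(s_f-\ell_f)=2|E|-2|V|=2|F|-4$ via Euler's formula, which is globally consistent with UP3. To obtain UP3 face by face, I traverse the boundary cycle of a face $f$ and track the vertical direction of its edges: a small switch angle is a local cusp of the boundary and a large switch angle is a local spread. The monotonicity of edges then forces the signed count of cusps minus spreads on each internal face to equal $+2$, since the boundary must close up by rotating once around the face, and to equal $-2$ on the outer face, whose boundary is traversed with opposite orientation.

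For the sufficiency direction, I would show that any labeling satisfying UP0--UP3 is realizable as an upward planar drawing of $G$ with embedding $\mathcal{E}$. The idea is to view the labeling as a Tamassia-style combinatorial representation: conditions UP1--UP3 are exactly the conservation laws of a network whose feasible solutions correspond to the ``shapes'' of upward planar drawings with the given embedding. I would then argue by induction on $|F|$. In the base case, a single internal face forms a cycle, and UP3 forces exactly two more small than large angles, which is realized by drawing a convex polygon with the prescribed top and bottom switches. In the inductive step, I would pick an internal face $f$ that shares an edge $e$ with some other face, remove $e$, update the labels on the merged face in the canonical way while checking that UP1--UP3 still hold, apply induction, and finally reinsert $e$ as a monotone arc inducing exactly the prescribed labels at its endpoints.

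The hardest step is precisely this geometric reinsertion of $e$: one must be able to draw $e$ as a monotone curve producing exactly the prescribed local labels at both endpoints without creating crossings, and this requires enough geometric flexibility in the drawing coming from the induction. A clean way to sidestep this difficulty is to invoke the straight-line upward drawability result of Di~Battista and Liotta~\cite{bdl-udtd-94}, which reduces the geometric realization question to a purely combinatorial consistency check, at the price of separating the combinatorial characterization from the geometric construction.
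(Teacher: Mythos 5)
This theorem is not proved in the paper at all: it is imported verbatim from the literature (Bertolazzi, Di Battista, Liotta et al.\ and Didimo, Giordano, Liotta), so there is no ``paper proof'' to compare against. Judging your argument on its own merits: the necessity direction is essentially correct and standard. The observation that incoming and outgoing edges form two consecutive blocks around a vertex, giving exactly two flat angles at a non-switch vertex and exactly one large angle at any vertex, is the right local argument for UP0--UP2, and the cusp/spread rotation count around each face boundary is the right argument for UP3 (your global Euler-formula consistency check is a nice sanity check but, as you note, does not by itself give the per-face statement).

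The sufficiency direction, however, has a genuine gap, and you have correctly located it but not closed it. Your induction on faces requires (i) that after deleting an edge $e$ the labels on the merged face can be ``updated in the canonical way'' so that UP1--UP3 still hold --- this is not automatic, since the endpoints of $e$ drop in degree and their UP1/UP2 budgets change, and the merged face must inherit a labeling satisfying UP3, which requires a case analysis you do not carry out; and (ii) that $e$ can be reinserted as a monotone arc realizing exactly the prescribed angle labels without crossings, which is precisely the geometric content of the theorem. Your proposed escape --- invoking the drawability result of Di Battista and Liotta --- is circular here, because Theorem~\ref{th:upward-conditions} \emph{is} (a reformulation of) that result; you would be citing the theorem to prove itself. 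The standard proof of sufficiency proceeds differently: one uses the angle labels to \emph{saturate} each face by adding dummy edges between switch vertices (each large angle in an internal face is split by inserting an edge, reducing the face until it has exactly one source and one sink), thereby extending $G$ to a planar $st$-graph with the same embedding; one then invokes the classical fact that every planar $st$-graph admits an upward planar drawing respecting its embedding, and deletes the dummy edges. If you want a self-contained proof, that augmentation-to-an-$st$-graph route is the one to follow; as the theorem is a cited black box in this paper, simply attributing it to the references would also be acceptable.
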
	

Let $(\mathcal E,\lambda)$ be an upward plane embedding and let $u$ be a vertex incident to $f_{\mathcal E}$. We often talk about \emph{the angle at $u$ inside $\mathcal E$}, referring to the complement of the angle at $u$ in $f_{\mathcal E}$. Then the angle at $u$ inside $\mathcal E$ is small, flat, or large if and only if the angle at $u$ in $f_{\mathcal E}$ is large, flat, or small, respectively.

The \upward problem asks whether a given DAG admits an upward planar drawing. Because of Theorem~\ref{th:upward-conditions}, this is equivalent to deciding whether the DAG admits a plane embedding $\mathcal E$ and a label assignment for the angles of $\mathcal E$ satisfying Conditions (UP0)--(UP3). 

\section{Solving Partition Via Upward Planarity} \label{se:from-partition-to-upward}

In this section we prove Theorem~\ref{th:partition-to-upward}. 

Let $\mathcal S=\{a_1,\dots,a_k\}$ be an instance of the \partition problem, where $\sum_{i=1}^k a_i=n$. We start by describing the construction of an $\bigoh(n)$-vertex biconnected outerplanar DAG $G$ from $\mathcal S$. Refer to Fig.~\ref{fig:partition-to-upward}. The structure of $G$ is simple, indeed it consists of a cycle $\mathcal C$ and of some other cycles, each sharing an edge with $\mathcal C$ and disjoint with each other. That is, the weak dual of an outerplane embedding $\mathcal O$ of $G$ is a star, where the \emph{weak dual} of $\mathcal O$ is the graph that has a vertex for each internal face of $\mathcal O$ and an edge between two vertices if the corresponding faces share an edge. 

\begin{figure}[htb]
	\centering
	\includegraphics[page=1, width=0.9\textwidth]{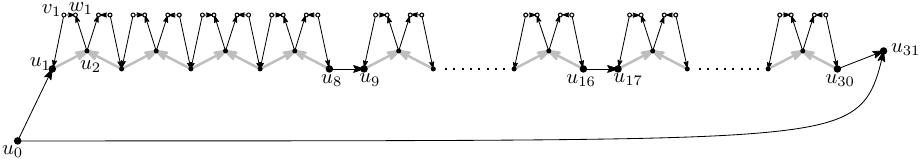}
	\caption{Construction of $G$ from $\mathcal S=\{1,1,2\}$; then $n=4$ and $k=3$. Vertices of $\mathcal C$ are filled disks, while other vertices are empty disks. Larger disks represent $u_0$, $u_{6n+2k+1}$, and the initial and final vertices of the paths $\mathcal P_1,\dots,\mathcal P_k$, which are represented by thicker gray lines.}
	\label{fig:partition-to-upward}
\end{figure}

We initialize $G$ to a cycle $\mathcal C=(u_0,u_1,\dots,u_{6n+2k+1})$. Let $\mathcal P$ be the path $(u_1,u_2,\dots,u_{6n+2k})$ in $\mathcal C$. Let $\mathcal P_1$ be the path that comprises the first $6a_1+2$ vertices $(u_1,\dots,u_{6a_1+2})$ of $\mathcal P$, let $\mathcal P_2$ be the path that comprises the next $6a_2+2$ vertices $(u_{6a_1+3},\dots,u_{6a_1+6a_2+4})$ of $\mathcal P$, and so on till $\mathcal P_k$. For each edge $(u_j,u_{j+1})$ of $\mathcal C$ that belongs to a path $\mathcal P_i$, we introduce in $G$ two vertices $v_j$ and $w_j$ and the edges of a path $\mathcal Q_j:=(u_j,v_j,w_j,u_{j+1})$; we denote by $\mathcal C_j$ the cycle $\mathcal Q_j \cup (u_j,u_{j+1})$. Finally, we direct the edges of $G$ as follows.  
\begin{itemize}
	\item First, for $i=1,\dots,k$, each edge $(u_j,u_{j+1})$ of $\mathcal P_i$ is outgoing from the vertex with odd index. 
	\item Second, for $i=1,\dots,k-1$, the edge connecting the last vertex of $\mathcal P_i$ to the first vertex of $\mathcal P_{i+1}$ is outgoing from the former vertex. 
	\item Third, both the edges incident to $u_0$ are outgoing from $u_0$ and both the edges incident to $u_{6n+2k+1}$ are incoming into $u_{6n+2k+1}$. 
	\item Finally, the edges of the path $\mathcal Q_j$ are outgoing from $v_j$ and incoming into $w_j$ if $j$ is odd, while they are incoming into $v_j$ and outgoing from $w_j$ if $j$ is even.
\end{itemize} 

Clearly, $G$ is an $\bigoh(n)$-vertex biconnected outerplanar DAG and can be constructed in $\bigoh(n)$ time. Next, we prove that $\mathcal S$ is a positive instance of \partition if and only if $G$ is a positive instance of the \upward problem. For $i=1,\dots,k$, let $\mathcal V_i$ be the set that comprises the vertices of $\mathcal P_i$ and the vertices of the paths $\mathcal Q_j$ such that $(u_j,u_{j+1})$ belongs to $\mathcal P_i$.

$(\Longrightarrow)$ Suppose first that there exists a partition  of $\mathcal S$ into two sets $\mathcal S_1$ and $\mathcal S_2$ such that  $\sum_{a_i\in \mathcal S_1} a_i=\sum_{a_i\in \mathcal S_2} a_i=n/2$. We construct a plane embedding $\mathcal E$ of $G$ as follows. Starting from any plane embedding of $\mathcal C$, we embed inside $\mathcal C$ any path $\mathcal Q_j$ such that the edge $(u_j,u_{j+1})$ belongs to a path $\mathcal P_i$ with $a_i \in \mathcal S_1$; also, we embed outside $\mathcal C$ any path $\mathcal Q_j$ such that the edge $(u_j,u_{j+1})$ belongs to a path $\mathcal P_i$ with $a_i \in \mathcal S_2$, so that $\mathcal C_j$ delimits an internal face of $\mathcal E$. Next, we specify a label assignment $\lambda$ for the angles of $\mathcal E$. 
\begin{figure}[htb]
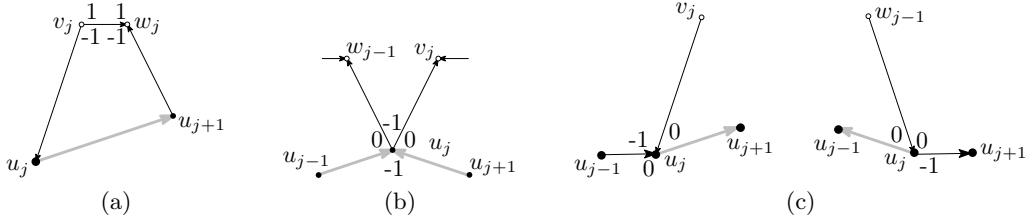
\tabcolsep=4pt
	\centering
	\begin{tabular}{c c c}
		\includegraphics[page=4, scale=.9]{Figures/Partition-To-Upward-Reduction.pdf} \hspace{3mm} &
		\includegraphics[page=5, scale=.9]{Figures/Partition-To-Upward-Reduction.pdf} \hspace{3mm} &
		\includegraphics[page=6, scale=.9]{Figures/Partition-To-Upward-Reduction.pdf} \\
		(a) \hspace{3mm} & (b) \hspace{3mm} & (c)\\
	\end{tabular}
	\caption{Labels for some angles of $\mathcal E$. (a) Switch vertices $v_j$ and $w_j$. (b) Vertices $u_j$ that are internal to some path $\mathcal P_i$. (c) Vertices $u_j$ that are the first or last vertex of some path $\mathcal P_i$.}
	\label{fig:labels}
\end{figure}

\begin{itemize}
	\item First, consider any degree-$2$ switch vertex $v_j$ (resp.\ $w_j$); see Fig.~\ref{fig:labels}(a). We label $-1$ the angle at $v_j$ (resp.\ at $w_j$) in the internal face of $\mathcal E$ delimited by $\mathcal C_j$, and $+1$ the other angle $v_j$ (resp.\ $w_j$) is incident to; this assignment respects Conditions (UP0), (UP1), and, vacuously, (UP2) of Theorem~\ref{th:upward-conditions}. 
	\item Second, consider any degree-$4$ non-switch vertex $u_j$ that is an internal vertex of a path $\mathcal P_i$; see Fig.~\ref{fig:labels}(b). We  label $0$ the angles at $u_j$ in the internal faces of $\mathcal E$ delimited by $\mathcal C_j$ and $\mathcal C_{j+1}$, and $-1$ the other two angles $u_j$ is incident to; this assignment respects Conditions (UP0), (UP2), and, vacuously, (UP1)  of Theorem~\ref{th:upward-conditions}.
	\item Third, consider any degree-$3$ non-switch vertex $u_j$ that is the first (resp.\ last) vertex  of a path $\mathcal P_i$; see Fig.~\ref{fig:labels}(c). We label $0$ the angle at $u_j$ in the internal face of $\mathcal E$ delimited by $\mathcal C_j$ (resp.\ by $\mathcal C_{j-1}$). If $\mathcal C_j$ (resp.\ $\mathcal C_{j-1}$) is embedded inside $\mathcal C$, then we label $0$ the angle at $u_j$ in the outer face of $\mathcal E$ and $-1$ the remaining angle $u_j$ is incident to, otherwise we label $-1$ the angle at $u_j$ in the outer face of $\mathcal E$ and $0$ the remaining angle $u_j$ is incident to; this assignment respects Conditions (UP0), (UP2), and, vacuously, (UP1)  of Theorem~\ref{th:upward-conditions}.
	\item Finally, consider the degree-$2$ switch vertex $u_0$ (resp.\ $u_{6n+2k+1}$). We label $+1$ the angle at $u_0$ (resp.\ $u_{6n+2k+1}$) in the outer face of $\mathcal E$, and $-1$ the other angle $u_0$ (resp.\ $u_{6n+2k+1}$) is incident to; this assignment respects Conditions (UP0), (UP1), and, vacuously, (UP2) of Theorem~\ref{th:upward-conditions}. 
\end{itemize}  

In order to prove that $(\mathcal E,\lambda)$ is an upward plane embedding of $G$, it remains to prove that $\lambda$ satisfies Condition (UP3) of Theorem~\ref{th:upward-conditions}. Consider any internal face delimited by a cycle $\mathcal C_j$. Two angles in such a face, those at $v_j$ and $w_j$, are labeled $-1$, while the other two angles, those at $u_j$ and $u_{j+1}$, are labeled $0$, hence $\lambda$ respects Condition (UP3) for the considered face. Two faces remain to be considered, namely the outer face $f_{\mathcal E}$ and the internal face $g_{\mathcal E}$ incident to $u_0$. In order to deal with these faces, we need to observe the following facts. 
\begin{itemize}
	\item If $a_i$ belongs to $\mathcal S_1$, the number of angles of $f_{\mathcal E}$ labeled $-1$ at the vertices in $\mathcal V_i$ is $6a_i$, while the number of angles of $f_{\mathcal E}$ labeled~$+1$ at the vertices in $\mathcal V_i$ is~$0$. Indeed, all the angles of $f_{\mathcal E}$ at the vertices of $\mathcal P_i$ are labeled~$-1$, except for the angles at the end-vertices of $\mathcal P_i$, which are labeled~$0$. The vertices $v_j$ and $w_j$ in $\mathcal V_i$ are not incident to $f_{\mathcal E}$. Analogously, if $a_i$ belongs to $\mathcal S_2$, the number of angles of $g_{\mathcal E}$ labeled~$-1$ at the vertices in $\mathcal V_i$ is $6a_i$, while the number of angles of $g_{\mathcal E}$ labeled~$+1$ at the vertices in $\mathcal V_i$ is~$0$. 
	\item If $a_i$ belongs to $\mathcal S_1$, the number of angles of $g_{\mathcal E}$ labeled~$-1$ at the vertices in $\mathcal V_i$ is $6a_i+2$, while the number of angles of $g_{\mathcal E}$ labeled~$+1$ at the vertices in $\mathcal V_i$ is $12a_i+2$. Indeed, all the angles of $g_{\mathcal E}$ at the vertices of $\mathcal P_i$ are labeled~$-1$ and all the angles of $g_{\mathcal E}$ at the vertices $v_j$ and $w_j$ in $\mathcal V_i$ are labeled~$+1$ (note that there are $6a_i+1$ cycles $\mathcal C_j$ such that the edge $(u_j,u_{j+1})$ belongs to $\mathcal P_i$). Analogously, if $a_i$ belongs to $\mathcal S_2$, then the number of angles of $f_{\mathcal E}$ labeled~$-1$ at the vertices in $\mathcal V_i$ is $6a_i+2$, while the number of angles of $f_{\mathcal E}$ labeled~$+1$ at the vertices in $\mathcal V_i$ is $12a_i+2$. 
\end{itemize}  

It follows that the number of angles of $f_{\mathcal E}$ labeled $+1$ is $2+\sum_{a_i\in \mathcal S_2} (12a_i+2)$, where the first term of the sum comes from the angles at $u_0$ and $u_{6n+2k+1}$, while the number of angles of $f_{\mathcal E}$ labeled $-1$ is $\sum_{a_i\in \mathcal S_1} (6a_i) + \sum_{a_i\in \mathcal S_2} (6a_i+2)$. Thus, the number of angles of $f_{\mathcal E}$ labeled $+1$ minus the number of angles of $f_{\mathcal E}$ labeled $-1$ is $2+\sum_{a_i\in \mathcal S_2} (6a_i)-\sum_{a_i\in \mathcal S_1} (6a_i)=2$, where the last equality follows from $\sum_{a_i\in \mathcal S_1} a_i=\sum_{a_i\in \mathcal S_2} a_i$. Hence, $\lambda$ respects Condition (UP3) of Theorem~\ref{th:upward-conditions} for $f_{\mathcal E}$. Analogously, the number of angles of $g_{\mathcal E}$ labeled $+1$ is $\sum_{a_i\in \mathcal S_1} (12a_i+2)$, while the number of angles of $g_{\mathcal E}$ labeled $-1$ is $2+\sum_{a_i\in \mathcal S_1} (6a_i+2) + \sum_{a_i\in \mathcal S_2} (6a_i)$, where the first term of the sum comes from the angles at $u_0$ and $u_{6n+2k+1}$. Thus, the number of angles of $g_{\mathcal E}$ labeled $-1$ minus the number of angles of $g_{\mathcal E}$ labeled $+1$ is $2-\sum_{a_i\in \mathcal S_1} (6a_i)+\sum_{a_i\in \mathcal S_2} (6a_i)=2$, where the last equality follows from $\sum_{a_i\in \mathcal S_1} a_i=\sum_{a_i\in \mathcal S_2} a_i$. Hence, $\lambda$ respects Condition (UP3) of Theorem~\ref{th:upward-conditions} for $g_{\mathcal E}$.

$(\Longleftarrow)$ Suppose now that $G$ admits an upward plane embedding $(\mathcal E, \lambda)$. We construct a partition of $\mathcal S$ into two sets $\mathcal S_1$ and $\mathcal S_2$ such that  $\sum_{a_i\in \mathcal S_1} a_i=\sum_{a_i\in \mathcal S_2} a_i=n/2$ as follows. For $i=1,\dots,k$, denote by $\mathcal Q^*(i)$ the path $\mathcal Q_{j}$ that forms a cycle with the first edge $(u_j,u_{j+1})$ of $\mathcal P_i$. If $\mathcal Q^*(i)$ lies inside $\mathcal C$ in $\mathcal E$, then we put $a_i$ into $\mathcal S_1$, otherwise into $\mathcal S_2$. Clearly, the partition can be constructed in $\bigoh(n)$ time. Observe that every cycle $\mathcal C_j$ delimits an internal face of $\mathcal E$, except, possibly, for a single cycle $\mathcal C_j$ which might delimit the outer face of $\mathcal E$. If that is the case, we re-embed the path $\mathcal Q_j$ so that every cycle $\mathcal C_j$ delimits an internal face of $\mathcal E$. This modification does not alter whether any path $\mathcal Q_j$ is embedded inside or outside $\mathcal C$; see Fig.~\ref{fig:small-cycles-internal}.

\begin{figure}[htb]
	\centering
	\includegraphics[scale=1,page=7]{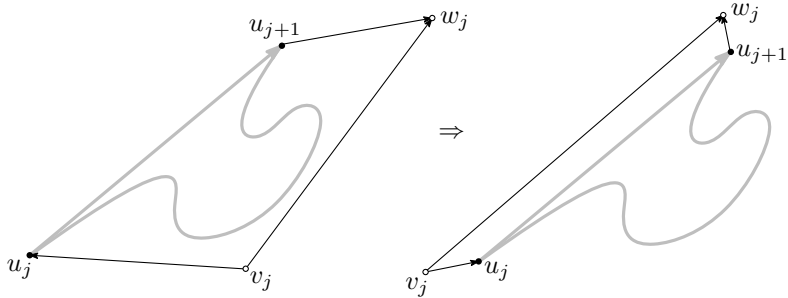}
	\caption{The embedding $\mathcal E$ can be modified by re-embedding a single path $\mathcal Q_j$ so that every cycle $\mathcal C_j$ delimits an internal face of $\mathcal E$.}
	\label{fig:small-cycles-internal}
\end{figure}

We now prove that $\sum_{a_i\in \mathcal S_1} a_i=n/2$, which implies that $\sum_{a_i\in \mathcal S_2} a_i=n/2$. As before, let $f_{\mathcal E}$ be the outer face of $\mathcal E$ and $g_{\mathcal E}$ be the internal face of $\mathcal E$ incident to $u_0$. We are going to use the following lemmata. 

\begin{figure}[htb]
	\centering
	\includegraphics[scale=1,page=2]{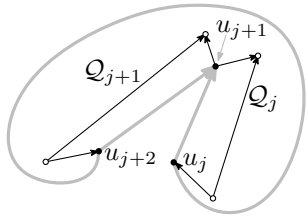}
	\caption{If $\mathcal Q_j$ lies inside $\mathcal C$ (gray in the figure) in $\mathcal E$, then $\mathcal Q_{j+1}$ also lies  inside $\mathcal C$ in $\mathcal E$.}
	\label{fig:partition-to-upward-angles}
\end{figure}

\begin{lemma}	\label{le:all-inside}
For any $i\in \{1,\dots,k\}$, either all the paths $\mathcal Q_j$ such that $(u_j,u_{j+1})$ belongs to $\mathcal P_i$ lie inside $\mathcal C$, or they all lie outside $\mathcal C$ in $\mathcal E$.
\end{lemma}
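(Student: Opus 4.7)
The plan is to deduce the lemma from the following local claim: for every $j$ such that both edges $(u_j, u_{j+1})$ and $(u_{j+1}, u_{j+2})$ belong to $\mathcal{P}_i$, the paths $\mathcal{Q}_j$ and $\mathcal{Q}_{j+1}$ lie on the same side of $\mathcal{C}$ in $\mathcal{E}$; transitivity along $\mathcal{P}_i$ then yields the lemma.

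To establish the local claim, I would focus on the vertex $u_{j+1}$, which is internal to $\mathcal{P}_i$ and has degree $4$, its incident edges being the two $\mathcal{C}$-edges $(u_j,u_{j+1})$, $(u_{j+1},u_{j+2})$ and the two $\mathcal{Q}$-edges $(u_{j+1},w_j)$, $(u_{j+1},v_{j+1})$. A short parity check on the four edge-direction rules from the construction shows that the two $\mathcal{C}$-edges at $u_{j+1}$ agree in orientation relative to $u_{j+1}$ (both incoming if $j$ is odd, both outgoing if $j$ is even), while the two $\mathcal{Q}$-edges at $u_{j+1}$ share the opposite orientation. Hence $u_{j+1}$ is a non-switch vertex with exactly two in-edges and two out-edges and, by Condition (UP2) of Theorem~\ref{th:upward-conditions}, is incident to exactly two flat angles in $\mathcal{E}$, these being precisely the angles between consecutive edges of opposite orientation in the rotation at $u_{j+1}$.

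I would then split into cases based on the sides of $\mathcal{C}$ occupied by $\mathcal{Q}_j$ and $\mathcal{Q}_{j+1}$. If they lie on the same side of $\mathcal{C}$, the rotation at $u_{j+1}$ places the two $\mathcal{C}$-edges consecutively and the two $\mathcal{Q}$-edges consecutively, yielding exactly two flat angles and two switch angles at $u_{j+1}$, in agreement with Condition (UP2). If instead they lie on opposite sides, the rotation at $u_{j+1}$ alternates between a $\mathcal{C}$-edge and a $\mathcal{Q}$-edge, so every pair of consecutive edges has opposite orientation, producing four flat angles and contradicting Condition (UP2). This forces $\mathcal{Q}_j$ and $\mathcal{Q}_{j+1}$ onto the same side of $\mathcal{C}$, which completes the proof of the local claim and hence of the lemma.

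The main, though minor, obstacle is pure bookkeeping: one has to verify carefully the parity claim about the orientations of the four edges at $u_{j+1}$ (both for $j$ odd and for $j$ even) and describe the rotation at $u_{j+1}$ explicitly in each of the four embedding scenarios (same side inside, same side outside, and the two opposite-side configurations). Once this is laid out and kept consistent with Fig.~\ref{fig:partition-to-upward-angles}, the remainder of the argument is a one-line count of flat angles.
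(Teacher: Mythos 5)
Your proof is correct and follows essentially the same route as the paper: both reduce the lemma to showing that consecutive paths $\mathcal Q_j$ and $\mathcal Q_{j+1}$ share a side of $\mathcal C$, via a local angle argument at the internal vertex $u_{j+1}$ using Theorem~\ref{th:upward-conditions}. The only (immaterial) difference is that you count flat angles at $u_{j+1}$ directly from the rotation via Conditions (UP0) and (UP2), whereas the paper phrases the same obstruction through the flat angles in the faces of $\mathcal C_j$ and $\mathcal C_{j+1}$ and the large/small angles of $u_{j+1}$ with respect to $\mathcal C$.
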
 

\begin{proof}
Consider any index $j$ such that $\mathcal Q_j$ and $\mathcal Q_{j+1}$ both exist. We prove that, if $\mathcal Q_j$ lies  inside $\mathcal C$ in $\mathcal E$, then $\mathcal Q_{j+1}$ also lies  inside $\mathcal C$ in $\mathcal E$; refer to Fig.~\ref{fig:partition-to-upward-angles}. An analogous proof shows that, if $\mathcal Q_j$ lies outside $\mathcal C$ in $\mathcal E$, then $\mathcal Q_{j+1}$ also lies outside $\mathcal C$ in $\mathcal E$. Note that this claim implies the statement of the lemma. Let $(\mathcal E_{\mathcal C}, \lambda_{\mathcal C})$ be the upward plane embedding of $\mathcal C$ which is the restriction of $(\mathcal E, \lambda)$ to $\mathcal C$. Since $u_{j+1}$ is a non-switch vertex in $\mathcal C_j$ and in $\mathcal C_{j+1}$, the angles at $u_{j+1}$ in the faces delimited by $\mathcal C_j$ and by $\mathcal C_{j+1}$ are both flat in $(\mathcal E, \lambda)$, by Condition~(UP0) of Theorem~\ref{th:upward-conditions}. Since $u_{j+1}$ is a switch vertex in $\mathcal C$, the planarity of $\mathcal E$ implies that the angles at $u_{j+1}$ inside and outside $\mathcal C$ in $(\mathcal E_{\mathcal C}, \lambda_{\mathcal C})$ are large and small, respectively. Then the planarity of $\mathcal E$ implies that $\mathcal C_{j+1}$ lies inside $\mathcal C$.
\end{proof}

\begin{lemma} \label{le:contribution-inside}
For any $i\in \{1,\dots,k\}$, suppose that $\mathcal Q^*(i)$ lies inside $\mathcal C$ in $\mathcal E$. Then:
\begin{itemize}
	\item the number of angles of $f_{\mathcal E}$ labeled $-1$ at the vertices in $\mathcal V_i$ is $6a_i$;
	\item the number of angles of $f_{\mathcal E}$ labeled~$+1$ at the vertices in $\mathcal V_i$ is~$0$;
	\item the number of angles of $g_{\mathcal E}$ labeled~$-1$ at the vertices in $\mathcal V_i$ is $6a_i+2$;
	\item the number of angles of $g_{\mathcal E}$ labeled~$+1$ at the vertices in $\mathcal V_i$ is $12a_i+2$.
\end{itemize}  
\end{lemma}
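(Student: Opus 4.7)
The plan is to analyze, vertex by vertex in $\mathcal V_i$, which face-angle receives which label under $\lambda$, and then sum the contributions to $f_{\mathcal E}$ and $g_{\mathcal E}$. The starting observation is that by Lemma~\ref{le:all-inside}, since $\mathcal Q^*(i)$ lies inside $\mathcal C$, every path $\mathcal Q_j$ whose cycle $\mathcal C_j$ uses an edge of $\mathcal P_i$ lies inside $\mathcal C$. Consequently, each such $\mathcal C_j$ bounds an internal face that is sandwiched between $\mathcal C$ and the ``big'' interior face $g_{\mathcal E}$; no $v_j$ or $w_j$ in $\mathcal V_i$ lies on $f_{\mathcal E}$, while every vertex of $\mathcal V_i$ lies on $g_{\mathcal E}$.

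I would then classify the vertices of $\mathcal V_i$ by type and apply Conditions~(UP1)--(UP3) of Theorem~\ref{th:upward-conditions} to pin down their angle labels. For an internal vertex $u_j$ of $\mathcal P_i$ (non-switch of degree $4$, incident to $f_{\mathcal E}$, $g_{\mathcal E}$, and the two small-cycle faces of $\mathcal C_{j-1}$ and $\mathcal C_j$), the prescribed edge orientations — recalling that $\mathcal P_i$ starts at an odd-indexed vertex and ends at an even-indexed one, so that the two $\mathcal C$-edges at $u_j$ have the same direction and the two $\mathcal Q$-edges have the opposite common direction — yield the cyclic in/out pattern $(\mathrm{out},\mathrm{in},\mathrm{in},\mathrm{out})$ or its reverse around $u_j$. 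By (UP2), $u_j$ has exactly $2$ flat and $2$ small angles, and flat angles arise precisely at the in-to-out transitions in the cyclic order, which here place them in the two $\mathcal C_j$-faces, leaving the angles in $f_{\mathcal E}$ and in $g_{\mathcal E}$ small. A parallel case analysis for the first and last vertex of $\mathcal P_i$ (non-switch of degree $3$, $2$ flat plus $1$ small by (UP2)) shows that the flat angles sit in $f_{\mathcal E}$ and in the unique adjacent $\mathcal C_j$-face, while the small angle sits in $g_{\mathcal E}$. Finally, for each $v_j, w_j$ (degree-$2$ switch), I would apply (UP3) to the internal face bounded by $\mathcal C_j$: since $u_j$ and $u_{j+1}$ are non-switches in $\mathcal C_j$ their angles there are flat, so the switch angles at $v_j$ and $w_j$ inside $\mathcal C_j$ must sum to $-2$, i.e.\ both are small; hence both angles of $v_j$ and $w_j$ in $g_{\mathcal E}$ are large.

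Summing over $\mathcal V_i$ is then just bookkeeping. The face $f_{\mathcal E}$ sees a $-1$ angle at each of the $6a_i$ internal vertices of $\mathcal P_i$ and only $0$'s at the two endpoints, for $6a_i$ small and $0$ large angles. The face $g_{\mathcal E}$ sees a $-1$ angle at each of the $6a_i+2$ vertices of $\mathcal P_i$ and a $+1$ angle at each of the $2(6a_i+1)=12a_i+2$ vertices $v_j,w_j$ in $\mathcal V_i$, giving $6a_i+2$ small and $12a_i+2$ large angles, as claimed. The only delicate step — and the one I expect to be the main obstacle to write carefully — is justifying that the cyclic in/out pattern at each $u_j$ forces the flat angles into the small-cycle faces rather than into $f_{\mathcal E}$ or $g_{\mathcal E}$; this requires tracking the parity of $j$ together with the explicit orientations fixed in the construction, but is otherwise routine.
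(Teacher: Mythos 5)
Your proposal is correct and follows essentially the same route as the paper: invoke Lemma~\ref{le:all-inside} to place all the $\mathcal Q_j$ inside $\mathcal C$, use (UP0) to force flat angles at $u_j,u_{j+1}$ in each face bounded by $\mathcal C_j$, use (UP3)/(UP1) to make $v_j,w_j$ small there and large in $g_{\mathcal E}$, use (UP1)/(UP2) to make the remaining $\mathcal P_i$-angles small, and then count. The step you flag as delicate is immediate from (UP0) (an angle is flat iff its two delimiting edges are one incoming and one outgoing), which is exactly how the paper dispatches it.
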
 

\begin{proof}
By Lemma~\ref{le:all-inside}, all the paths $\mathcal Q_j$ such that the edge $(u_j,u_{j+1})$ belongs to $\mathcal P_i$ lie inside $\mathcal C$ in $\mathcal E$. We first argue about the angles at the vertices in $\mathcal V_i$ in $(\mathcal E, \lambda)$. Every cycle $\mathcal C_j$ such that the edge $(u_j,u_{j+1})$ belongs to $\mathcal P_i$ delimits an internal face $f_j$ of $\mathcal E$. Since $u_j$ and $u_{j+1}$ are degree-$2$ non-switch vertices of $\mathcal C_j$, by Condition~(UP0) of Theorem~\ref{th:upward-conditions}, the angles at $u_j$ and $u_{j+1}$ in $f_j$ are both flat in $(\mathcal E, \lambda)$. By Condition~(UP3), the angles at $v_j$ and $w_j$ in $f_j$ are both small in $(\mathcal E, \lambda)$, and thus, by Condition~(UP1), the angles at $v_j$ and $w_j$ in $g_{\mathcal E}$ are both large in $(\mathcal E, \lambda)$. Every vertex  $u_j$ in $\mathcal V_i$ which is neither the first nor the last vertex of $\mathcal P_i$ has two incident flat angles in $f_{j-1}$ and $f_j$, hence by Condition~(UP2), its incident angles in $f_{\mathcal E}$ and $g_{\mathcal E}$ are both small. Finally, the first (resp.\ last) vertex of $\mathcal P_i$ has one incident flat angle in $f_j$ (resp.\ by $f_{j-1}$) and one incident flat angle in $f_{\mathcal E}$, hence, by Condition~(UP1), its incident angle in $g_{\mathcal E}$ is small. 

We are now ready to count. All the angles in $f_{\mathcal E}$ at the vertices of $\mathcal P_i$ are labeled~$-1$, except for the angles at the end-vertices of $\mathcal P_i$, which are labeled~$0$; these are the $6a_i$ angles in $f_{\mathcal E}$ labeled $-1$. The vertices $v_j$ and $w_j$ in $\mathcal V_i$ are not incident to $f_{\mathcal E}$. Thus, no angle in $f_{\mathcal E}$ is labeled~$+1$. Further, all the angles in $g_{\mathcal E}$ at the vertices of $\mathcal P_i$ are labeled~$-1$; these are the $6a_i+2$ angles in $g_{\mathcal E}$ labeled $-1$. Finally, all the angles in $g_{\mathcal E}$ at the vertices $v_j$ and $w_j$ in $\mathcal V_i$ are labeled~$+1$; these are the $12a_i+2$ angles in $g_{\mathcal E}$ labeled $+1$, as there are $6a_i+1$ cycles $\mathcal C_j$ such that the edge $(u_j,u_{j+1})$ belongs to $\mathcal P_i$.
\end{proof}

The proof of the following is analogous to the one of Lemmaa~\ref{le:contribution-inside}.

\begin{lemma} \label{le:contribution-outside}
	For any $i\in \{1,\dots,k\}$, suppose that $\mathcal Q^*(i)$ lies outside $\mathcal C$ in $\mathcal E$. Then:
	\begin{itemize}
		\item the number of angles of $f_{\mathcal E}$ labeled $-1$ at the vertices in $\mathcal V_i$ is $6a_i+2$;
		\item the number of angles of $f_{\mathcal E}$ labeled~$+1$ at the vertices in $\mathcal V_i$ is~$12a_i+2$;
		\item the number of angles of $g_{\mathcal E}$ labeled~$-1$ at the vertices in $\mathcal V_i$ is $6a_i$;
		\item the number of angles of $g_{\mathcal E}$ labeled~$+1$ at the vertices in $\mathcal V_i$ is $0$.
	\end{itemize}  
\end{lemma}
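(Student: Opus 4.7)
The plan is to mirror the proof of Lemma~\ref{le:contribution-inside} with the roles of $f_{\mathcal E}$ and $g_{\mathcal E}$ interchanged where appropriate. By Lemma~\ref{le:all-inside}, the hypothesis that $\mathcal Q^*(i)$ lies outside $\mathcal C$ upgrades to the statement that every path $\mathcal Q_j$ attached to $\mathcal P_i$ lies outside $\mathcal C$ in $\mathcal E$. Hence, for each such $\mathcal Q_j$, the vertices $v_j$ and $w_j$ are incident to $f_{\mathcal E}$ rather than to $g_{\mathcal E}$, while the cycle $\mathcal C_j$ still bounds a bounded face $f_j$ of $\mathcal E$, thanks to the re-embedding observation made right after Lemma~\ref{le:all-inside}.

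First I would re-derive the angle labels at the vertices in $\mathcal V_i$ by the same chain of conditions used in Lemma~\ref{le:contribution-inside}. For each $\mathcal C_j$: Conditions~(UP0) and~(UP3) of Theorem~\ref{th:upward-conditions} force the angles of $u_j,u_{j+1}$ in $f_j$ to be flat and the angles of $v_j,w_j$ in $f_j$ to be small; since $v_j$ and $w_j$ are switches of $G$, Condition~(UP1) then forces their remaining incident angles to be large, and in the outside case these remaining angles lie in $f_{\mathcal E}$ instead of in $g_{\mathcal E}$. For a vertex $u_j$ that is an internal vertex of $\mathcal P_i$, the two flat angles demanded by Condition~(UP2) are already supplied by $f_{j-1}$ and $f_j$, so both remaining angles at $u_j$, one in $f_{\mathcal E}$ and one in $g_{\mathcal E}$, must be small.

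The main delicate point, and the step I expect to require the most care, is the treatment of the two endpoints of $\mathcal P_i$, which are degree-$3$ non-switches. Here I would argue directly from the cyclic order of the incident edges: since $\mathcal Q_j$ (or $\mathcal Q_{j-1}$) now sits outside $\mathcal C$, the face $f_j$ (resp.\ $f_{j-1}$) occupies the outside slot at the endpoint, and a quick check of edge orientations shows that one flat angle sits in $f_j$ (resp.\ $f_{j-1}$) and one in $g_{\mathcal E}$, so by Condition~(UP2) the unique remaining angle, the one in $f_{\mathcal E}$, is small. This is precisely the point where the inside and outside cases diverge: in the inside case the endpoint contributes a flat angle to $f_{\mathcal E}$ and a small angle to $g_{\mathcal E}$, while in the outside case it contributes a small angle to $f_{\mathcal E}$ and a flat angle to $g_{\mathcal E}$.

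The count then follows mechanically. The path $\mathcal P_i$ has $6a_i+2$ vertices $u_j$ (two endpoints and $6a_i$ internal ones) and there are $6a_i+1$ cycles $\mathcal C_j$ attached to it, hence $12a_i+2$ vertices of type $v_j,w_j$. Summing: in $f_{\mathcal E}$ every $u_j$ contributes a label $-1$, giving $6a_i+2$ such labels, and every $v_j,w_j$ contributes a label $+1$, giving $12a_i+2$ such labels; in $g_{\mathcal E}$ only the internal $u_j$'s contribute, giving $6a_i$ labels $-1$ and no labels $+1$, since the endpoints contribute flat angles and the vertices $v_j,w_j$ are not incident to $g_{\mathcal E}$. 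These are exactly the four quantities in the statement.
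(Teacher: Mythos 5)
Your proof is correct and is exactly the argument the paper intends: the paper omits the proof of Lemma~\ref{le:contribution-outside}, stating only that it is analogous to that of Lemma~\ref{le:contribution-inside}, and you carry out that analogy faithfully, including the one genuinely diverging step (the endpoints of $\mathcal P_i$ now contributing a small angle to $f_{\mathcal E}$ and a flat angle to $g_{\mathcal E}$), which you verify correctly from the edge orientations. The resulting counts $6a_i+2$, $12a_i+2$, $6a_i$, and $0$ all check out.
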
 

Recall that, if $\mathcal Q^*(i)$ lies inside (outside) $\mathcal C$, then $a_i \in \mathcal S_1$ (resp.\ $a_i \in \mathcal S_2$). Thus, by Lemmata~\ref{le:contribution-inside} and~\ref{le:contribution-outside}, we have that, over all the vertices in $\mathcal V_1\cup \dots \cup \mathcal V_k$, the face $f_{\mathcal E}$ is incident to $\sum_{a_i\in \mathcal S_1} (6a_i) + \sum_{a_i\in \mathcal S_2} (6a_i+2)$ angles labeled $-1$ and to $\sum_{a_i\in \mathcal S_2} (12a_i+2)$ angles labeled $+1$, hence the difference $\delta$ between the number of angles labeled $+1$ in $f_{\mathcal E}$ and the number of angles labeled $-1$ in $f_{\mathcal E}$ at the vertices in $\mathcal V_1\cup \dots \cup \mathcal V_k$ is equal to $6(\sum_{a_i\in \mathcal S_2} a_i - \sum_{a_i\in \mathcal S_1} a_i)$. Assume, for a contradiction, that $\sum_{a_i\in \mathcal S_1} a_i \neq \sum_{a_i\in \mathcal S_2} a_i$. This implies that $\delta$ is larger than or equal to $6$, or smaller than or equal to $-6$. Apart from $\mathcal V_1\cup \dots \cup \mathcal V_k$, the vertex set of $G$ only includes two more vertices, namely $u_0$ and $u_{u_{6n+2k+1}}$. Thus, the difference between the number of angles labeled $+1$ in $f_{\mathcal E}$ and the number of angles labeled $-1$ in $f_{\mathcal E}$ is larger than or equal to $4$, or smaller than or equal to $-4$. However, by Condition~(UP3) of Theorem~\ref{th:upward-conditions}, such a difference is equal to $2$, a contradiction which proves that $\sum_{a_i\in \mathcal S_1}a_i = \sum_{a_i\in \mathcal S_2} a_i$ and concludes the proof of Theorem~\ref{th:partition-to-upward}.   		

\section{Solving Upward Planarity Via Partition} \label{se:from-upward-to-partition}

In this section we prove Theorem~\ref{th:upward-to-partition}. Our algorithm borrows some ideas from the algorithm for testing the rectilinear planarity of an outerplanar graph~\cite{DBLP:journals/comgeo/Frati22}. Throughout the section, we assume that there exists an algorithm with $f(n)$ running time that decides whether a set of positive integers can be partitioned into two sets, so that the sum of the elements in each set is equal to $n/2$. 


\begin{figure}[htb]
	\centering
	\includegraphics[scale=.9,page=5]{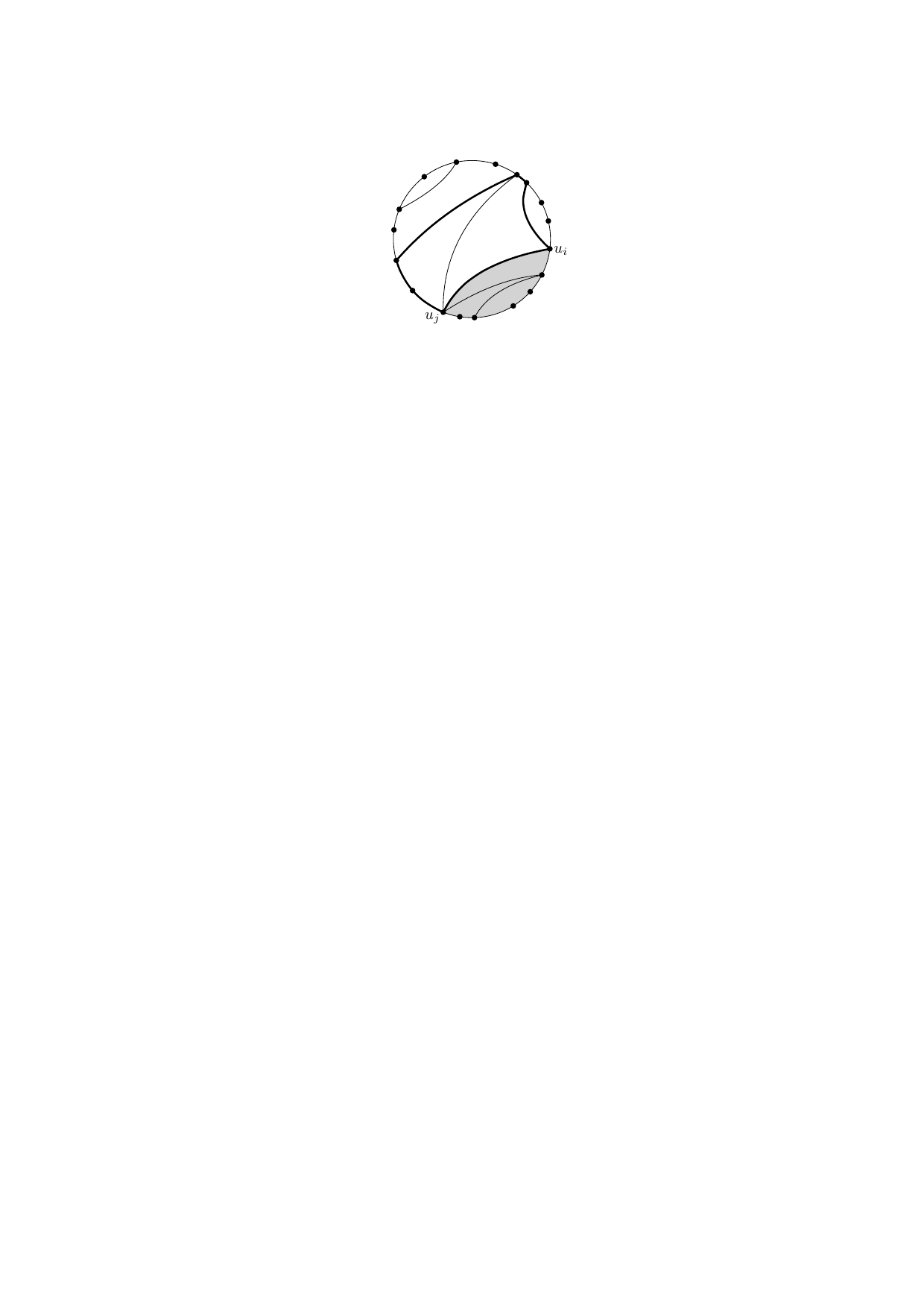}
	\caption{A biconnected outerplanar DAG $G$. Edge directions are not shown. A cycle $\mathcal C$ is thick and the internal faces of the subgraphs $G_i$ of $G$ are colored gray.}
	\label{fig:subgraphs}
\end{figure}

Refer to Fig.~\ref{fig:subgraphs}. Let $G$ be an $n$-vertex biconnected outerplanar DAG, let $\mathcal O$ be any of its two outerplane embeddings, let $e=(u,v)$ be a prescribed edge incident to $f_{\mathcal O}$, and let $\mathcal C=(v_0=u,v_1,\dots,v_k=v)$ be the cycle bounding the internal face of $\mathcal O$ the edge $e$ is incident to. For any $i\in \{1,\dots,k\}$ such that the edge $(v_{i-1},v_i)$ is not incident to $f_{\mathcal O}$, let $\mathcal P_i$ be the path along $f_{\mathcal O}$ that connects $v_{i-1}$ and $v_i$ and that does not comprise vertices of $\mathcal C$ other than $v_{i-1}$ and $v_i$. We denote by $G_i$ the subgraph of $G$ induced by the vertices of $\mathcal P_i$. 

We want to test in $\bigoh(f(n))$ time whether $G$ admits an upward plane embedding in which $e$ is incident to the outer face. Our algorithm not only decides whether $G$ admits such an embedding, but it actually computes the \emph{feasible set} $\mathcal F$ of $G$, that is, the set of all pairs $(\mu,\nu)$ with $\mu,\nu\in\{-1,0,1\}$ such that $G$ admits an upward plane embedding $(\mathcal E,\lambda)$ in which $e$ is incident to $f_{\mathcal E}$ and the angles at $u$ and $v$ inside $\mathcal E$ are $\mu$ and $\nu$, respectively; we call this a \emph{$(\mu,\nu)$-embedding}. As the computation of the feasible sets will eventually be done recursively, we assume to have the feasible set $\mathcal F_i$ of each subgraph $G_i$ of $G$; note that, in a $(\mu_i,\nu_i)$-embedding of $G_i$, the edge $(v_{i-1},v_i)$ is incident to the outer face. Since $\mathcal F$ might comprise at most $9$ pairs, we consider each pair $(\mu,\nu)$ individually and show how to test whether $(\mu,\nu)\in \mathcal F$ in $\bigoh(f(k))$ time. The basis for our algorithm is the following structural lemma.

\begin{figure}[htb]
	\centering
	\includegraphics[scale=.9,page=4]{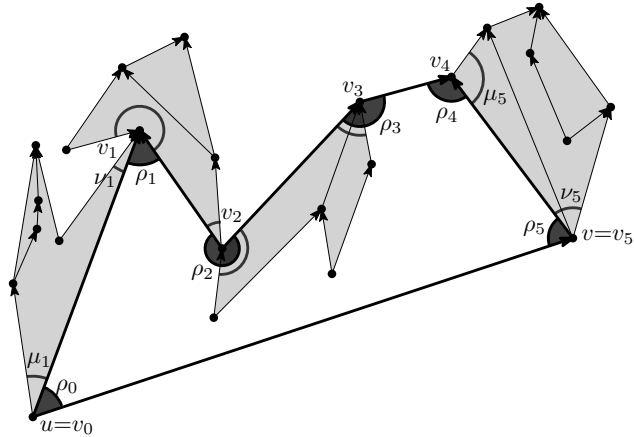}
	\caption{Illustration for Lemma~\ref{le:structure}. The cycle $\mathcal C$ is thick, the internal faces of the subgraphs $G_i$ are gray, the angles $\rho_i$ are black opaque, and the angles $\mu_i$ and $\nu_i$ are not filled.}
	\label{fig:setting}
\end{figure}


\begin{lemma} \label{le:structure}
For any $\mu,\nu\in\{-1,0,1\}$, we have that $G$ admits a $(\mu,\nu)$-embedding if and only if there exist values $\rho_0, \dots,\rho_k,\mu_1,\dots,\mu_k,\nu_1,\dots,\nu_k$ in $\{-1,0,1\}$ and there exists an \emph{in-out assignment} $\mathcal A$, that is, an assignment of each subgraph $G_i$ of $G$ to the inside or to the outside of $\mathcal C$, such that (refer to Fig.~\ref{fig:setting}):
\begin{enumerate} [{$\mathcal P$}1:]
	\item For $i=0,\dots,k$, if $v_i$ is a switch vertex in $\mathcal C$, then $\rho_i\in \{-1,1\}$, else $\rho_i=0$.
	\item $\sum_{i=0}^k \rho_i = -2$.
	\item For $i=1,\dots,k$, if $G_i$ is defined, then $(\mu_i,\nu_i)\in \mathcal F_i$.
	\item For $i=1,\dots,k$, if $G_i$ is defined and $\mathcal A(G_i)=\textrm{in}$, then $\mu_i\leq \rho_{i-1}$ and $\nu_i\leq \rho_{i}$; also, if $G_i$ is defined and $\mathcal A(G_i)=\textrm{out}$, then \mbox{$\mu_i\leq -\rho_{i-1}$ and $\nu_i\leq -\rho_{i}$}.
	\item For $i=1,\dots,k-1$, we have $\nu_i+\mu_{i+1}\leq 0$.
	\item For $i=1,\dots,k-1$, if $\nu_i=\mu_{i+1}=0$, if $G_i$ and $G_{i+1}$ are defined, and if $\mathcal A(G_i)=\mathcal A(G_{i+1})=\textrm{in}$, then $\rho_i=1$; also, if $\nu_i=\mu_{i+1}=0$, if $G_i$ and $G_{i+1}$ are defined, and if $\mathcal A(G_i)=\mathcal A(G_{i+1})=\textrm{out}$, then $\rho_i=-1$.
	\item If $G_1$ is undefined or $\mathcal A(G_1)=\textrm{in}$, then $\mu=\rho_0$, otherwise $\mu=\rho_0+\mu_1+1$. If $G_k$ is undefined or $\mathcal A(G_k)=\textrm{in}$, then $\nu=\rho_k$, otherwise \mbox{$\nu=\rho_k+\nu_k+1$.}
\end{enumerate}
\end{lemma}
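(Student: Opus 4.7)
My plan is to read each of the seven properties as a local instance of conditions (UP0)--(UP3) from Theorem~\ref{th:upward-conditions}, with $\rho_i$ denoting the label of the aggregate angle at $v_i$ on the inside of $\mathcal C$, the pair $(\mu_i,\nu_i)$ denoting the labels at $v_{i-1},v_i$ of the aggregate inside-$G_i$ angle (as in the definition of $\mathcal F_i$), and $\mathcal A$ recording on which side of $\mathcal C$ each $G_i$ sits. The vertex-wide identity $\sum_{\alpha\ni v}\lambda(\alpha)=2-\deg(v)$, which follows directly from (UP1)/(UP2), would be a key bookkeeping tool throughout.

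For the forward direction $(\Longrightarrow)$, I would start from a $(\mu,\nu)$-embedding $(\mathcal E,\lambda)$ of $G$ and extract the data as follows: since each $G_i$ meets $\mathcal C$ only along $(v_{i-1},v_i)$, it lies entirely on one side of $\mathcal C$, giving $\mathcal A(G_i)\in\{\text{in},\text{out}\}$; let $\rho_i$ be the label at $v_i$ in the inside face of the restriction $(\mathcal E|_{\mathcal C},\lambda|_{\mathcal C})$, which is itself an upward plane embedding of $\mathcal C$; and let $(\mu_i,\nu_i)$ be the inside labels at $v_{i-1},v_i$ of $(\mathcal E|_{G_i},\lambda|_{G_i})$. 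Then $\mathcal P 3$ is immediate, $\mathcal P 1$ is just (UP0)--(UP2) at $v_i$ in $\mathcal E|_{\mathcal C}$, and $\mathcal P 2$ is (UP3) applied to the inside face of $\mathcal E|_{\mathcal C}$. Properties $\mathcal P 4$ and $\mathcal P 5$ reduce to a monotonicity principle: the inside-$G_j$ angle at $v_i$ is a geometric sub-angle of the aggregate on the side of $\mathcal C$ containing $G_j$, and the integer order on $\{-1,0,1\}$ respects this inclusion; the same argument excludes those pairs $(\nu_i,\mu_{i+1})$ whose entries would overflow $360^\circ$ at $v_i$. Property $\mathcal P 6$ is the degenerate sub-case in which two flat inside-$G_j$ angles meet on one side of $\mathcal C$ at $v_i$: since these already ``fill'' $360^\circ$ modulo a slack in the $F_{\text{in}}$- or $f_{\mathcal E}$-angle, the aggregate on that side is forced to the stated extreme by the vertex-wide identity. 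Property $\mathcal P 7$ is the analogous translation of (UP1)/(UP2) at $v_0$ and $v_k$: when $G_1$ is outside $\mathcal C$, $\mu$ is the label of the geometric sum of the inside-$\mathcal C$ and inside-$G_1$ angles at $v_0$, and a direct enumeration of the feasible label triples at $v_0$ shows this label to equal $\rho_0+\mu_1+1$; otherwise $\mu$ coincides with the inside-$\mathcal C$ label $\rho_0$.

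For the reverse direction $(\Longleftarrow)$, I would build $(\mathcal E,\lambda)$ by starting from a plane embedding of $\mathcal C$, using $\mathcal P 3$ to pick a $(\mu_i,\nu_i)$-embedding $(\mathcal E_i,\lambda_i)$ of each defined $G_i$, and attaching it inside or outside $\mathcal C$ along $(v_{i-1},v_i)$ per $\mathcal A(G_i)$. I would inherit the labels strictly inside each $G_i$ from $\lambda_i$; at each $v_i$, the two remaining angles (one in the inside-$\mathcal C$ face not covered by any $G_j$, one in $f_{\mathcal E}$) would be labeled so that the inside-$\mathcal C$ aggregate at $v_i$ has label $\rho_i$ and the vertex-wide identity holds. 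Properties $\mathcal P 1$, $\mathcal P 4$, $\mathcal P 5$, $\mathcal P 6$ ensure that the required labels lie in $\{-1,0,1\}$ and that (UP0)--(UP2) hold at every $v_i$; (UP3) on each internal face of each $G_i$ is inherited from $\lambda_i$; (UP3) on the inside face of $\mathcal E|_{\mathcal C}$ is exactly $\mathcal P 2$; and (UP3) on $f_{\mathcal E}$ follows by summing the vertex-wide identities over all vertices of $G$ and subtracting the other (UP3)'s. Finally, $\mathcal P 7$ delivers the prescribed values of $\mu$ and $\nu$ at $u$ and $v$.

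The main obstacle I expect is the local case analysis underpinning $\mathcal P 4$--$\mathcal P 7$: at each $v_i$, I would have to work through the four sub-cases $(\mathcal A(G_i),\mathcal A(G_{i+1}))\in\{\text{in},\text{out}\}^2$ and translate (UP1)/(UP2) at $v_i$ into the stated inequalities and identities, with $\mathcal P 6$ requiring a careful unpacking of how two flat inside-$G_j$ angles interact with the slack angle on their common side of $\mathcal C$. Once this local verification is in hand, both directions of the lemma will assemble straightforwardly along $\mathcal C$ and the attached subgraphs $G_i$.
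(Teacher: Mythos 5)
Your proposal is correct and follows essentially the same route as the paper: both directions extract (resp.\ assemble) the data from the restrictions of the embedding to $\mathcal C$ and to the subgraphs $G_i$, and translate (UP0)--(UP3) locally at the vertices $v_i$ into Properties $\mathcal P1$--$\mathcal P7$. Your use of the vertex identity $\sum_{\alpha\ni v}\lambda(\alpha)=2-\deg(v)$ to obtain (UP3) for the outer face ``for free'' in the reverse direction is a slightly cleaner bookkeeping device than the paper's, but the argument is otherwise the same.
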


\begin{proof}
$(\Longrightarrow)$ Let $(\mathcal E,\lambda)$ be a $(\mu,\nu)$-embedding of $G$. For each $i\in \{1,\dots,k \}$ such that $G_i$ is defined, let $\mathcal A(G_i)=\textrm{in}$ if $G_i$ lies inside $\mathcal C$ in $\mathcal E$, and let $\mathcal A(G_i)=\textrm{out}$ otherwise. For $i=0,\dots,k$, let $\rho_i$ be the angle at $v_i$ inside the upward plane embedding $\mathcal E_{\mathcal C}$ of $\mathcal C$ induced by $\mathcal E$. Also, for $i=1,\dots,k$, if $G_i$ is defined, let $\mu_i$ and $\nu_i$ be the angles at $v_{i-1}$ and $v_i$, respectively, inside the embedding ${\mathcal E}_i$ of $G_i$ induced by $\mathcal E$, otherwise let $\mu_i=\nu_i=-1$.

Properties ${\mathcal P}_1$ and ${\mathcal P}_2$ then follow from Properties (UP0) and (UP3) of Theorem~\ref{th:upward-conditions}, respectively, applied to $\mathcal E_{\mathcal C}$. Property ${\mathcal P}_3$ follows from the definition of $\mu_i$ and $\nu_i$. Property ${\mathcal P}_4$ follows from the fact that, if $G_i$ is defined and $\mathcal A(G_i)=\textrm{in}$, then $G_i$ lies inside $\mathcal C$ in $\mathcal E$, thus the angles at $v_{i-1}$ and $v_i$ inside ${\mathcal E}_i$, which are equal to $\mu_i$ and $\nu_i$, are respectively smaller than or equal to the angles at $v_{i-1}$ and $v_i$ inside $\mathcal E_{\mathcal C}$, which are equal to $\rho_{i-1}$ and $\rho_i$; the case in which $G_i$ is defined and $\mathcal A(G_i)=\textrm{out}$ can be discussed similarly. Property ${\mathcal P}_5$ is trivial if $G_i$ or $G_{i+1}$ is undefined, as in this case one of $\nu_i$ and $\mu_{i+1}$ is equal to $-1$ and the other one does not exceed $1$; otherwise, it comes from Properties (UP1) and (UP2) of Theorem~\ref{th:upward-conditions} -- by restricting $\mathcal E$ to the cycles bounding the outer faces of ${\mathcal E}_i$ and ${\mathcal E}_{i+1}$. Property ${\mathcal P}_6$ comes from the fact that an angle comprising two flat angles is large -- note that ${\mathcal E}_i$ and ${\mathcal E}_{i+1}$ are one outside the other, since $e$ is incident to $f_{\mathcal E}$. Finally, Property ${\mathcal P}_7$ comes from the fact that the angles at $u$ and $v$ inside $\mathcal E$ are $\mu$ and $\nu$, respectively. If $G_1$ is undefined or if it lies inside $\mathcal C$ in $\mathcal E$, then the angle at $u$ inside $\mathcal E$, which is $\mu$, coincides with the angle at $u$ inside $\mathcal E_{\mathcal C}$, which is $\rho_0$. If $G_1$ is defined and lies outside $\mathcal C$ in $\mathcal E$, then the angles at $u$ inside ${\mathcal E}_1$ and inside ${\mathcal E}_{\mathcal C}$ ``sum up'' to compose the angle at $u$ inside ${\mathcal E}$, however the numerical definition of such angles needs a $+1$ in the sum; for example, if the angles at $u$ inside ${\mathcal E}_1$ and inside ${\mathcal E}_{\mathcal C}$ are both equal to $-1$, then the angle at $u$ inside ${\mathcal E}$ is also equal to $-1$; or if they are both equal to $0$, then the angle at $u$ inside ${\mathcal E}$ is equal to $1$. A similar argument applies for the angles at $v$.


$(\Longleftarrow)$ Assume now that there exist values $\rho_0, \dots,\rho_k,\mu_1,\dots,\mu_k,\nu_1,\dots,\nu_k$ in $\{-1,0,1\}$ and there exists an in-out assignment $\mathcal A$ satisfying Properties ${\mathcal P}_1$--${\mathcal P}_7$. We construct a $(\mu,\nu)$-embedding of $G$ as follows. Let $\mathcal E_{\mathcal C}$ be any of the two plane embeddings of $\mathcal C$. For $i=0,\dots,k$, we label the angle at $v_i$ inside $\mathcal E_{\mathcal C}$ as $\rho_i$ and the angle at $v_i$ in $f_{\mathcal E_{\mathcal C}}$ as $-\rho_i$. By Properties ${\mathcal P}_1$ and ${\mathcal P}_2$, the described labeling $\lambda_{\mathcal C}$, together with $\mathcal E_{\mathcal C}$, defines an upward plane embedding of $\mathcal C$. For $i=1,\dots,k$, if $G_i$ is defined, then we let $(\mathcal E_i,\lambda_i)$ be a $(\mu_i,\nu_i)$-embedding of $G_i$. This exists by Property ${\mathcal P}_3$; we embed $\mathcal E_i$ inside $\mathcal C$ if $\mathcal A(G_i)=\textrm{in}$ and we embed $\mathcal E_i$ outside $\mathcal C$ if $\mathcal A(G_i)=\textrm{out}$. This results in a plane embedding $\mathcal E$ of $G$. The assignment $\lambda$ of labels to the angles in $\mathcal E$ is the one obtained by combining $(\mathcal E_{\mathcal C},\lambda_{\mathcal C})$ with the embeddings $(\mathcal E_i,\lambda_i)$. In particular, let $g_{\mathcal E}$ be the internal face of $\mathcal E$ incident to $e$. Then the angle at $v_i$ in $g_{\mathcal E}$ is obtained from $\rho_i$ by subtracting $\nu_i+1$ if $G_i$ is defined and $\mathcal A(G_i)=\textrm{in}$, and by subtracting $\mu_{i+1}+1$ if $G_{i+1}$ is defined and $\mathcal A(G_{i+1})=\textrm{in}$. The angle at $v_i$ in $f_{\mathcal E}$ is determined analogously. Properties ${\mathcal P}_4$--${\mathcal P}_6$ ensure that $(\mathcal E,\lambda)$ is an upward plane embedding. In particular, Property ${\mathcal P}_4$ ensures that the angles at $v_{i-1}$ and $v_i$ inside $\mathcal E_i$ ``fit'' inside $\mathcal E_{\mathcal C}$ (if $\mathcal A(G_i)=\textrm{in}$) or outside $\mathcal E_{\mathcal C}$ (if $\mathcal A(G_i)=\textrm{out}$). Furthermore, Property ${\mathcal P}_5$ ensures that the angles at $v_i$ inside $\mathcal E_i$ and $\mathcal E_{i+1}$ fit together in $\mathcal E$ (avoiding the possibility that they are one large and one flat, or both large). Moreover, Property ${\mathcal P}_6$ ensures that the angles at $v_i$ inside $\mathcal E_i$ and $\mathcal E_{i+1}$ simultaneously fit inside $\mathcal E_{\mathcal C}$ (if $G_i$ and $G_{i+1}$ are defined and $\mathcal A(G_i)=\mathcal A(G_{i+1})=\textrm{in}$) or outside $\mathcal E_{\mathcal C}$ (if $G_i$ and $G_{i+1}$ are defined and $\mathcal A(G_i)=\mathcal A(G_{i+1})=\textrm{out}$). Finally, Property ${\mathcal P}_7$ ensures that the upward plane embedding $(\mathcal E,\lambda)$ is a $(\mu,\nu)$-embedding, by requiring that the angles at $u$ and $v$ inside $\mathcal E$ are $\mu$ and $\nu$, respectively.
\end{proof}

By Lemma~\ref{le:structure}, in order to test whether a pair $(\mu,\nu)$ with $\mu,\nu\in\{-1,0,1\}$ belongs to the feasible set $\mathcal F$ of $G$, our algorithm will decide whether there exist values $\rho_0, \dots,\rho_k$, $\mu_1,\dots,\mu_k$, $\nu_1,\dots,\nu_k$ in $\{-1,0,1\}$ and an in-out assignment $\mathcal A$ satisfying Properties ${\mathcal P}_1$--${\mathcal P}_7$. 

The overview of our algorithm is as follows. First, we fix $\bigoh(1)$ values among the ones that we have to decide, in all possible ways. This ensures that things are ``fine'' close to $u$ and $v$, at the expense of performing the next steps $\bigoh(1)$ times, namely once for each way to fix the above $\bigoh(1)$ values. Then we show that the remaining values $\mu_i$ and $\nu_i$ can be decided in $\bigoh(k)$ time without loss of generality by looking at the graph structure. Finally, we have to decide the values $\rho_i$, as well as the in-out assignment $\mathcal A$. The previous decisions on the values $\mu_i$ and $\nu_i$ bind together some subgraphs $G_i$; assigning such bound subgraphs to the inside or to the outside of $\mathcal C$ gives a contribution to the sum of the angles inside or outside $\mathcal C$. Balancing such contributions can be done via \partition.

Recall that we have, for each subgraph $G_i$ of $G$, its feasible set $\mathcal F_i$. We can assume that $\mathcal F_i\neq \emptyset$, as otherwise we can conclude that $G$ admits no $(\mu,\nu)$-embedding, by Property~${\mathcal P}_3$ of Lemma~\ref{le:structure}. The following lemma will be useful.

\begin{lemma} \label{le:simultaneous-values}
A feasible set $\mathcal F_i$ does not contain two pairs $(-1,\cdot)$ and $(0,\cdot)$, and does not contain two pairs $(\cdot,-1)$ and $(\cdot,0)$.
\end{lemma}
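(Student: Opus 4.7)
The plan is to prove both halves of the lemma by a purely local analysis at the vertex $v_{i-1}$ (respectively $v_i$), using only Conditions (UP0) and (UP2) of Theorem~\ref{th:upward-conditions} together with the complementation rule relating the ``inside angle'' to the outer-face angle.

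Fix $i\in\{1,\dots,k\}$ and any upward plane embedding $(\mathcal E_i,\lambda_i)$ of $G_i$ realizing a pair in $\mathcal F_i$, so that $(v_{i-1},v_i)$ is incident to $f_{\mathcal E_i}$. By the definition recalled in the preliminaries, $\mu_i$ is the complement of the label of the outer-face angle at $v_{i-1}$: $\mu_i=-1$ iff that outer-face angle is large, $\mu_i=0$ iff it is flat, and $\mu_i=+1$ iff it is small. I would then split on whether $v_{i-1}$ is a switch vertex of $G_i$.

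If $v_{i-1}$ is a switch vertex in $G_i$, then by Condition (UP0) the outer-face angle at $v_{i-1}$ is either small or large, so $\mu_i\in\{-1,+1\}$; in particular $\mu_i\neq 0$. If instead $v_{i-1}$ is a non-switch vertex in $G_i$, then by Condition (UP2) the number of large angles at $v_{i-1}$ in $(\mathcal E_i,\lambda_i)$ is $0$, so the outer-face angle at $v_{i-1}$ cannot be large; hence it is small or flat, yielding $\mu_i\in\{0,+1\}$, and in particular $\mu_i\neq -1$. In either case the set of values that $\mu_i$ may take over all embeddings witnessing pairs of $\mathcal F_i$ does not contain both $-1$ and $0$, so $\mathcal F_i$ cannot simultaneously contain a pair $(-1,\cdot)$ and a pair $(0,\cdot)$.

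The analogous statement for $\nu_i$ follows by the same argument applied to $v_i$ in place of $v_{i-1}$, so $\mathcal F_i$ cannot simultaneously contain a pair $(\cdot,-1)$ and a pair $(\cdot,0)$ either. There is no real obstacle here: the lemma is a direct consequence of the local angle-count constraints at a single vertex, and no global analysis of $G_i$'s structure or of any case distinction across different embeddings is required, because the restrictions on $\mu_i$ depend only on whether $v_{i-1}$ is a switch in $G_i$, which is a fixed property of the graph.
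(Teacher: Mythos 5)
Your proof is correct and takes essentially the same route as the paper's: both arguments reduce the claim to the fact that whether $v_{i-1}$ (resp.\ $v_i$) is a switch vertex of $G_i$ is a fixed property of the graph, which by Condition (UP0) forces $\mu_i\neq 0$ in the switch case and by Condition (UP2) forces $\mu_i\neq -1$ in the non-switch case. The paper states the same dichotomy in contrapositive form ($\mu_i=-1$ implies switch, $\mu_i=0$ implies non-switch), so there is no substantive difference.
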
	

\begin{proof}
If $\mathcal F_i$ contains a pair $(-1,\cdot)$, then $v_{i-1}$ is a switch vertex in $G_i$, while if it contains a pair $(0,\cdot)$, then $v_{i-1}$ is a non-switch vertex in $G_i$, hence $\mathcal F_i$ cannot contain both pairs. The proof for the pairs $(\cdot,-1)$ and $(\cdot,0)$ is analogous.
\end{proof}

We start by fixing $\rho_0$ and $\rho_k$ in every possible way -- each of them has a value in $\{-1,0,1\}$. Also, if $G_1$ is defined, we fix $\mu_1$ and $\mathcal A(G_1)$ in every possible way -- note that $\mu_1\in \{-1,0,1\}$ and $\mathcal A(G_1)\in \{\textrm{in},\textrm{out}\}$; otherwise, we let $\mu_1=-1$. Analogously, if $G_k$ is defined, we fix $\mu_k$, $\nu_k$, and $\mathcal A(G_k)$ in every possible way, otherwise we let $\mu_k=\nu_k=-1$. This results in $\bigoh(1)$ tuples of values $\langle \rho_0, \rho_k, \mu_1, \mathcal A(G_1), \mu_k, \nu_k, \mathcal A(G_k)\rangle$. For each of them, we test in $\bigoh(1)$ time whether it satisfies Properties ${\mathcal P}_1$ (this concerns the values $\rho_0$ and $\rho_k$),  ${\mathcal P}_3$ (this concerns the values $\mu_k$ and $\nu_k$), ${\mathcal P}_4$ (this concerns the values $\rho_0$, $\mu_1$ and $\mathcal A(G_1)$, and the values $\rho_k$, $\nu_k$ and $\mathcal A(G_k)$), and ${\mathcal P}_7$ (this concerns the values $\rho_0$, $\mu_1$ and $\mathcal A(G_1)$, and the values $\rho_k$, $\nu_k$ and $\mathcal A(G_k)$) of Lemma~\ref{le:structure}. If not, we discard the tuple. Otherwise, we say it is a \emph{candidate tuple}. We are going to independently test in $\bigoh(f(k))$ time whether each candidate tuple is \emph{extensible}, i.e., whether there exist values for $\rho_1, \dots,\rho_{k-1},\mu_2,\dots,\mu_{k-1},\nu_1,\dots,\nu_{k-1}$ and assignments for the subgraphs $G_2,\dots,G_{k-1}$ that are defined that, together with the values in the candidate tuple,  satisfy Properties ${\mathcal P}_1$--${\mathcal P}_7$ of Lemma~\ref{le:structure}. 

The following lemmata prove that the values $\mu_1,\nu_1,\mu_2,\dots,\nu_{k-1}$ can be decided {\em without loss of generality}, in this order; this means that, if we decided the values $\mu_1,\nu_1,\mu_2,\nu_2,\dots,\mu_i$ (or $\nu_1,\mu_2,\nu_2,\dots,\nu_i$), then the candidate tuple under consideration is extensible if and only if an extension exists with the chosen values $\nu_1,\mu_2,\nu_2,\dots,\mu_i$ (resp., $\nu_1,\mu_2,\nu_2,\dots,\nu_i$).

\begin{lemma} \label{le:from-mu-to-nu}
Suppose that, for some $i\in\{1,\dots,k-1\}$, the values $\mu_1,\nu_1,\dots,\mu_i$ have been decided without loss of generality. Then $\nu_i$ can also be decided without loss of generality in $\bigoh(1)$ time, as follows. If $G_i$ is undefined, we let $\nu_i=-1$. Otherwise, if $\mathcal F_i$ does not contain any pair $(\mu_i,\cdot)$, then it is concluded that the current candidate tuple is not extensible. Otherwise, we let $\nu_i$ be the smallest value $\beta$ such that $(\mu_i,\beta)\in \mathcal F_i$. 
\end{lemma}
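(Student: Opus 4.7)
\medskip
\noindent\emph{Proof sketch.} The plan is to establish that each of the three cases prescribed by the lemma defines a without-loss-of-generality choice for $\nu_i$. Since any prescribed value is trivially a candidate, it suffices to show the converse: whenever the current candidate tuple admits an extension with $\nu_i = \beta'$, it also admits an extension with $\nu_i$ equal to the prescribed value, obtained by modifying only this single coordinate. I first observe that among Properties ${\mathcal P}_1$--${\mathcal P}_7$ of Lemma~\ref{le:structure}, the variable $\nu_i$ appears only in ${\mathcal P}_3$, ${\mathcal P}_4$, ${\mathcal P}_5$, and ${\mathcal P}_6$ (note that ${\mathcal P}_7$ involves $\nu_k$ but not $\nu_i$ for $i<k$), so only these four properties need to be rechecked.

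The two corner cases are easy. If $G_i$ is undefined, then the $G_i$-dependent hypotheses of ${\mathcal P}_3$, ${\mathcal P}_4$, and ${\mathcal P}_6$ fail, and the only surviving constraint on $\nu_i$ is ${\mathcal P}_5$, which is satisfied vacuously by $\nu_i=-1$ since $\mu_{i+1}\leq 1$. If $G_i$ is defined but $\mathcal F_i$ contains no pair $(\mu_i,\cdot)$, then ${\mathcal P}_3$ fails for every choice of $\nu_i$, and non-extensibility is immediate.

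For the main case, let $\beta$ be the smallest value with $(\mu_i,\beta)\in\mathcal F_i$ and consider any extension with $\nu_i=\beta'\geq\beta$. The plan is to replace $\nu_i$ by $\beta$, leaving every other value (including $\rho_i$ and the in-out assignment) unchanged, and verify ${\mathcal P}_3$--${\mathcal P}_6$. Property ${\mathcal P}_3$ holds by the definition of $\beta$; Property ${\mathcal P}_4$ is preserved because decreasing $\nu_i$ only loosens the inequality $\nu_i\leq\pm\rho_i$; and Property ${\mathcal P}_5$ is preserved because $\beta+\mu_{i+1}\leq\beta'+\mu_{i+1}\leq 0$.

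The main obstacle is Property ${\mathcal P}_6$, whose premise $\nu_i=0$ could in principle become newly active when $\nu_i$ is decreased from $1$ to $0$. To handle this I will invoke Lemma~\ref{le:simultaneous-values}: since $\mathcal F_i$ cannot simultaneously contain pairs with second coordinates $-1$ and $0$, whenever $\beta'>\beta$ one necessarily has $\beta'=1$. Thus only three subcases arise. If $\beta=1$ there is nothing to change. If $\beta=-1$, then after substitution $\nu_i\neq 0$ and the premise of ${\mathcal P}_6$ fails. If $\beta=0$, then ${\mathcal P}_5$ applied in the original extension (with $\nu_i=1$) forces $\mu_{i+1}=-1$, so after substitution the premise $\mu_{i+1}=0$ of ${\mathcal P}_6$ still fails. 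In every case ${\mathcal P}_6$ imposes no new constraint at index $i$, and the previously chosen $\rho_i$ remains valid. The $\bigoh(1)$ time bound is immediate since we only inspect the at-most-nine-element set $\mathcal F_i$.
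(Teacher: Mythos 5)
Your proof is correct and follows essentially the same route as the paper's: replace $\nu_i=\beta'$ by the minimal $\beta$ with $(\mu_i,\beta)\in\mathcal F_i$, note that only ${\mathcal P}_3$--${\mathcal P}_6$ can be affected, and observe that the premise of ${\mathcal P}_6$ cannot become newly active because $\nu_i=0=\mu_{i+1}$ together with $\beta'=1$ would already violate ${\mathcal P}_5$ in the original extension. The appeal to Lemma~\ref{le:simultaneous-values} to rule out $\beta'=0$ when $\beta=-1$ is not strictly needed (that subcase is harmless for ${\mathcal P}_6$ anyway), but it is correct and does not change the substance of the argument.
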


\begin{proof}
First, deciding $\nu_i$ takes $\bigoh(1)$ time as the size of $\mathcal F_i$ is $\bigoh(1)$. In order to prove that $\nu_i$ is decided without loss of generality, we only discuss the case in which $G_i$ is defined, as the case in which it is not is simpler. 

Since $\mu_1,\nu_1,\dots,\mu_i$ have been decided without loss of generality, if the candidate tuple $\langle \rho_0, \rho_k$, $\mu_1, \mathcal A(G_1)$, $\nu_k$, $\mathcal A(G_k)\rangle$ is extensible, an extension exists with the chosen values $\nu_1,\mu_2,\nu_2,\dots,\mu_i$. If $\mathcal F_i$ does not contain any pair $(\mu_i,\cdot)$, then because the value $\mu_i$ has been decided without loss of generality, we can conclude that that the current candidate tuple is not extensible. Otherwise, suppose that an extension exists and that, in such an extension, $\nu_i$ is not the smallest value $\beta$ such that $(\mu_i,\beta)\in \mathcal F_i$, say $\nu_i=\beta'\neq \beta$. By Property~${\mathcal P}_3$ and by definition of feasible set, we have $(\mu_i,\beta')\in \mathcal F_i$, which implies that $\beta'>\beta$. Then, in the extension of $\langle \rho_0, \rho_k, \mu_1, \mathcal A(G_1), \nu_k, \mathcal A(G_k)\rangle$, we can replace the value $\nu_i=\beta'$ with $\nu_i=\beta$, thus decreasing such a value and leaving unaltered all other values and the in-out assignment. The resulting set of values, together with the in-out assignment, still satisfies Properties ${\mathcal P}_1$--${\mathcal P}_7$ of Lemma~\ref{le:structure}. Indeed, Properties ${\mathcal P}_1$, ${\mathcal P}_2$, and ${\mathcal P}_7$ are not affected by the change of value for $\nu_i$ (note that $i\leq k-1$). Property~${\mathcal P}_3$ is satisfied as  $(\mu_i,\beta)\in \mathcal F_i$. The inequalities $\nu_i\leq \rho_{i}$, $\nu_i\leq -\rho_{i}$, and $\nu_i+\mu_{i+1}\leq 0$ of Properties ${\mathcal P}_4$ and ${\mathcal P}_5$ are satisfied with $\nu_i=\beta'$, hence they are also satisfied with $\nu_i=\beta<\beta'$. Finally, it cannot be that $\nu_i=\beta=\mu_{i+1}=0$, as this would imply that $\beta'=1$, and hence that Property~${\mathcal P}_5$ is not satisfied in the given extension; it follows that Property~${\mathcal P}_6$ is also satisfied. 
\end{proof}

\begin{lemma} \label{le:from-nu-to-mu}
	Suppose that, for some $i\in\{1,\dots,k-2\}$, the values $\mu_1,\nu_1,\dots,\nu_i$ have been decided without loss of generality. Then $\mu_{i+1}$ can also be decided without loss of generality in $\bigoh(1)$ time as follows. If $G_{i+1}$ is undefined, we let $\mu_{i+1}=-1$. Otherwise, we distinguish four cases.
	\begin{itemize}
		\item If $\nu_{i}=1$, then $\mu_{i+1}=-1$; if $\mathcal F_{i+1}$ does not contain any pair $(-1,\cdot)$, then it is concluded that the current candidate tuple is not extensible.
		\item If $\nu_{i}=0$ and $\mathcal F_{i+1}$ contains a pair $(-1,\cdot)$ or contains a pair $(0,\cdot)$, then we let $\mu_{i+1}=-1$ or $\mu_{i+1}=0$, respectively. If $\nu_{i}=0$ and $\mathcal F_{i+1}$ contains neither a pair $(-1,\cdot)$ nor a pair $(0,\cdot)$, then it is concluded that the current candidate tuple is not extensible.
		\item Suppose that $\nu_{i}=-1$ and $v_i$ is a non-switch vertex in $\mathcal C$. If $\mathcal F_{i+1}$ contains a pair $(-1,\cdot)$ or contains a pair $(0,\cdot)$, then we let $\mu_{i+1}=-1$ or $\mu_{i+1}=0$, respectively. If $\mathcal F_{i+1}$ contains neither a pair $(-1,\cdot)$ nor a pair $(0,\cdot)$, then it is concluded that the current candidate tuple is not extensible.
		\item Suppose finally that $\nu_{i}=-1$ and $v_i$ is a switch vertex in $\mathcal C$. 
		\begin{itemize}
			\item If $\mathcal F_{i+1}$ contains a pair $(\alpha,\beta)$ with $\{\alpha,\beta\}\subseteq \{-1,0\}$, we let $\mu_{i+1}=\alpha$. 
			\item Otherwise, if $\mathcal F_{i+1}$ contains the pair $(1,-1)$, we let $\mu_{i+1}=1$.
			\item Otherwise, if $\mathcal F_{i+1}$ contains the pair $(1,0)$ and does not contain the pair $(-1,1)$, we let $\mu_{i+1}=1$.
			\item Otherwise, if $\mathcal F_{i+1}$ contains the pair $(\alpha,1)$ with $\alpha \in \{-1,0\}$ and does not contain the pair $(1,0)$, we let $\mu_{i+1}=\alpha$.
			\item Otherwise, if $\mathcal F_{i+1}$ contains the pairs $(1,0)$ and $(-1,1)$, we look at $v_{i+1}$ and $G_{i+2}$. If $v_{i+1}$ is a non-switch vertex in $\mathcal C$, we let $\mu_{i+1}=1$. Otherwise, if $G_{i+2}$ is defined and $\mathcal F_{i+2}$ contains a pair $(0,\cdot)$, we let $\mu_{i+1}=1$. Otherwise, we let $\mu_{i+1}=-1$. 
			\item Otherwise, $\mathcal F_{i+1}=\{(1,1)\}$, and we let $\mu_{i+1}=1$. 
		\end{itemize} 
	\end{itemize}  
\end{lemma}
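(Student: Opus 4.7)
My plan is to prove this lemma with the same template as Lemma~\ref{le:from-mu-to-nu}: I handle the four cases in turn, observe that the decision runs in $\bigoh(1)$ time because $|\mathcal F_{i+1}|\in\bigoh(1)$, and then show that the prescribed value of $\mu_{i+1}$ (or the infeasibility verdict) is without loss of generality. The core technique is an exchange argument: assuming an extension of the candidate tuple exists with some alternative value $\mu_{i+1}=\alpha'$, I would construct another extension using the prescribed $\mu_{i+1}=\alpha$ by locally modifying $\nu_{i+1}$ (to a matching value in $\mathcal F_{i+1}$), the assignment $\mathcal A(G_{i+1})$, and possibly $\rho_i$, while leaving everything else untouched. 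The verification reduces to checking that Properties~$\mathcal P_1$--$\mathcal P_7$ are preserved, using $\mathcal P_1$ for the switch status of $v_i$, $\mathcal P_4$ for the interaction of $\mu_{i+1}$ with $\rho_i$ and $\mathcal A(G_{i+1})$, and $\mathcal P_5,\mathcal P_6$ for the cross-index interactions at $i+1$.

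Cases~1--3 are routine. If $\nu_i=1$, Property~$\mathcal P_5$ pins $\mu_{i+1}=-1$, and the infeasibility verdict becomes just Property~$\mathcal P_3$. If $\nu_i=0$, Property~$\mathcal P_5$ restricts $\mu_{i+1}$ to $\{-1,0\}$, and Lemma~\ref{le:simultaneous-values} guarantees that $\mathcal F_{i+1}$ contains at most one of a $(-1,\cdot)$ pair and a $(0,\cdot)$ pair, forcing the choice. Case~3 reduces to Case~2: when $v_i$ is non-switch, Property~$\mathcal P_1$ gives $\rho_i=0$ and Property~$\mathcal P_4$ then forces $\mu_{i+1}\leq 0$ regardless of $\mathcal A(G_{i+1})$.

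The main obstacle is Case~4, where $v_i$ is a switch vertex so $\rho_i\in\{-1,1\}$, and all three values $\mu_{i+1}\in\{-1,0,1\}$ are a priori feasible by appropriately pairing $\rho_i$ with $\mathcal A(G_{i+1})$. The preference order in the sub-cases is driven by how the paired $\nu_{i+1}$ value restricts the next decision through $\mathcal P_5$ at index $i+1$: a pair $(\alpha,\beta)$ with $\beta\in\{-1,0\}$ leaves the most slack for $\mu_{i+2}$, whereas $\beta=1$ forces $\mu_{i+2}=-1$. Accordingly, the first sub-case grabs any pair entirely in $\{-1,0\}$; the second prefers $(1,-1)$ over higher-$\nu$ options; and the later sub-cases trade off among the remaining possibilities. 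In each exchange, I would swap in the prescribed $\mu_{i+1}$ and a matching $\nu_{i+1}$, then adjust $\mathcal A(G_{i+1})$ (and if necessary $\rho_i$) so that $\mathcal P_4$ continues to hold; because $v_i$ is a switch, $\rho_i$ remains in $\{-1,1\}$ as required by $\mathcal P_1$, and because we leave all other $\rho_j$ untouched the sum in $\mathcal P_2$ is either preserved or adjusted by a matching flip of $\mathcal A(G_{i+1})$ alone.

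The subtlest sub-case is the fifth, where $\mathcal F_{i+1}\supseteq\{(1,0),(-1,1)\}$ and neither is dominated by a friendlier pair. Here $\mu_{i+1}=-1$ forces $\nu_{i+1}=1$ and hence $\mu_{i+2}=-1$ via $\mathcal P_5$, whereas $\mu_{i+1}=1$ pairs with $\nu_{i+1}=0$ but, if $\mu_{i+2}=0$, triggers Property~$\mathcal P_6$ at index $i+1$ and demands $\rho_{i+1}\in\{-1,1\}$. The one-step lookahead in the statement distinguishes exactly when each hazard is avoidable: if $v_{i+1}$ is non-switch then $\rho_{i+1}=0$ by $\mathcal P_1$, so $\mu_{i+2}\neq 0$ is needed anyway and picking $\mu_{i+1}=1$ is safe; if instead $G_{i+2}$ is defined and $\mathcal F_{i+2}$ contains a $(0,\cdot)$ pair, then $\mu_{i+2}=0$ is realizable together with a suitable $\rho_{i+1}\in\{-1,1\}$, making $\mu_{i+1}=1$ safe once again; otherwise both escape routes are blocked and only $\mu_{i+1}=-1$ can possibly work. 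Verifying that these lookahead choices are genuinely without loss of generality---that no global rearrangement can rescue a choice the rule rejects---is where the bulk of the careful bookkeeping lies, and it is the part of the proof I expect to require the most delicate case analysis.
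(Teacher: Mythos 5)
Your overall template --- an exchange argument that takes a hypothetical extension $\varphi$ of the candidate tuple, replaces $(\mu_{i+1},\nu_{i+1})$ by the prescribed pair, and re-verifies Properties ${\mathcal P}_1$--${\mathcal P}_7$ --- is the same as the paper's, and your handling of the first three cases and of the preference order among the sub-cases of Case~4 is essentially right. However, there are two genuine gaps. First, your exchange allows adjusting $\rho_i$ ``if necessary'' and asserts that the sum in Property~${\mathcal P}_2$ is ``preserved or adjusted by a matching flip of $\mathcal A(G_{i+1})$ alone.'' This cannot work: the in-out assignment does not appear in ${\mathcal P}_2$ at all, and changing $\rho_i$ from $-1$ to $1$ (or vice versa) changes $\sum_j\rho_j$ by $\pm 2$, destroying ${\mathcal P}_2$ with no local way to repair it. The correct exchange must leave every $\rho_j$ untouched; the paper does this by fixing $\rho_i=-1$ without loss of generality (the case $\rho_i=1$ being symmetric) and then checking, sub-case by sub-case, that the prescribed pair satisfies ${\mathcal P}_4$ with the \emph{unchanged} $\rho_i$ after flipping only $\mathcal A(G_{i+1})$ where needed (e.g., for the pair $(1,-1)$ one sets $\mathcal A(G_{i+1})=\textrm{out}$ so that $\mu_{i+1}=1=-\rho_i$).

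Second, in the lookahead sub-case ($\mathcal F_{i+1}\supseteq\{(1,0),(-1,1)\}$) your justification runs in the wrong direction: you argue that $\mu_{i+1}=1$ is ``safe'' because $\mu_{i+2}=0$ is \emph{realizable} together with a suitable $\rho_{i+1}$, but a without-loss-of-generality claim requires showing that \emph{any} extension already uses, or can be locally rewritten to use, the prescribed value. The paper's argument here is one of necessity: if $\mathcal F_{i+2}$ contains a pair $(0,\cdot)$, then by Lemma~\ref{le:simultaneous-values} it contains no pair $(-1,\cdot)$, so $\mu_{i+2}\ge 0$ in $\varphi$, so ${\mathcal P}_5$ forces $\nu_{i+1}\le 0$; combined with the observation that in this sub-case every pair of $\mathcal F_{i+1}$ has second coordinate at least $0$, this pins $(\mu_{i+1},\nu_{i+1})=(1,0)$ in $\varphi$ itself (and the remaining branch, where one rewrites $\varphi$ to use $(-1,1)$, needs the separate observation that $\mu_{i+2}=-1$ in $\varphi$ so that ${\mathcal P}_5$ survives $\nu_{i+1}=1$). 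You explicitly defer exactly this verification, so the hardest part of the lemma is acknowledged but not actually proved.
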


\begin{proof}
First, deciding $\mu_{i+1}$ takes $\bigoh(1)$ time as the size of $\mathcal F_{i+1}$ is $\bigoh(1)$. In order to prove that $\mu_{i+1}$ is decided without loss of generality, we only discuss the case in which $G_{i+1}$ is defined, as the case in which it is not is simpler.	Since $\mu_1,\nu_1,\dots,\nu_i$ have been decided without loss of generality, if there exists an extension of the current candidate tuple, there exists an extension $\varphi$ in which $\mu_1,\nu_1,\dots,\nu_i$ have the decided values. 

Suppose first that $\nu_{i}=1$. Property~${\mathcal P}_5$ then implies that $\mu_{i+1}=-1$ in $\varphi$, which coincides with our decision. Property~${\mathcal P}_3$, together with the fact that $\mu_1,\nu_1,\dots,\nu_i$ have been decided without loss of generality, implies that the current candidate tuple is not extensible if $\mathcal F_{i+1}$ does not contain any pair $(-1,\cdot)$. 

Suppose next that $\nu_{i}=0$. Property~${\mathcal P}_5$ then implies that $\mu_{i+1}=-1$ or $\mu_{i+1}=0$. By Lemma~\ref{le:simultaneous-values}, we have that $\mathcal F_{i+1}$ cannot contain both a pair $(-1,\cdot)$ and a pair $(0,\cdot)$. Hence, also in this case, our decision for the value of $\mu_{i+1}$ coincides with the one in $\varphi$. Property~${\mathcal P}_3$, together with the fact that $\mu_1,\nu_1,\dots,\nu_i$ have been decided without loss of generality, implies that the current candidate tuple is not extensible if $\mathcal F_{i+1}$ does not contain any pair $(-1,\cdot)$ or $(0,\cdot)$.

Suppose next that $\nu_{i}=-1$ and $v_i$ is a non-switch vertex in $\mathcal C$. By Property~${\mathcal P}_1$, we have $\rho_i=0$. By Property~${\mathcal P}_4$, we have $\mu_{i+1}\leq 0$, hence $\mu_{i+1}=-1$ or $\mu_{i+1}=0$. By Lemma~\ref{le:simultaneous-values}, we have that $\mathcal F_{i+1}$ cannot contain both a pair $(-1,\cdot)$ and a pair $(0,\cdot)$. Hence, also in this case, our decision for the value of $\mu_{i+1}$ coincides with the one in $\varphi$. Property~${\mathcal P}_3$, together with the fact that $\mu_1,\nu_1,\dots,\nu_i$ have been decided without loss of generality, implies that the current candidate tuple is not extensible if $\mathcal F_{i+1}$ does not contain any pair $(-1,\cdot)$ or $(0,\cdot)$.

Suppose finally that $\nu_{i}=-1$ and $v_i$ is a switch vertex in $\mathcal C$. By Property~${\mathcal P}_1$, we have $\rho_i=-1$ or $\rho_i=1$ in $\varphi$; suppose the former, as the other case is analogous. Also, let $\alpha'$ and $\beta'$ be the values of $\mu_{i+1}$ and $\nu_{i+1}$ in $\varphi$. We distinguish some cases.
\begin{itemize}
	\item Suppose that $\mathcal F_{i+1}$ contains a pair $(\alpha,\beta)$ with $\{\alpha,\beta\}\subseteq \{-1,0\}$. By Lemma~\ref{le:simultaneous-values}, we have that $\mathcal F_{i+1}$ contains only one such a pair, and that $\alpha\leq \alpha'$ and $\beta\leq \beta'$. Then, in $\varphi$, we can replace the value $\mu_{i+1}=\alpha'$ and $\nu_{i+1}=\beta'$ with $\mu_{i+1}=\alpha$ and $\nu_{i+1}=\beta$, leaving unaltered all other values and the in-out assignment. The resulting set of values, together with the in-out assignment, still satisfies Properties ${\mathcal P}_1$--${\mathcal P}_7$ of Lemma~\ref{le:structure}. Indeed, Properties ${\mathcal P}_1$, ${\mathcal P}_2$, and ${\mathcal P}_7$ are not affected by the change of value for $\mu_{i+1}$ and $\nu_{i+1}$ (note that $i\leq k-2$). Property~${\mathcal P}_3$ is satisfied as $(\alpha,\beta)\in \mathcal F_{i+1}$. The inequalities $\mu_{i+1}\leq \rho_{i}$, $\nu_{i+1}\leq \rho_{i+1}$, $\mu_{i+1}\leq -\rho_{i}$, $\nu_{i+1}\leq -\rho_{i+1}$, $\nu_{i}+\mu_{i+1}\leq 0$, and $\nu_{i+1}+\mu_{i+2}\leq 0$ of Properties ${\mathcal P}_4$ and ${\mathcal P}_5$ are satisfied with $\mu_{i+1}=\alpha'$ and $\nu_{i+1}=\beta'$, hence they are also satisfied with $\mu_{i+1}=\alpha\leq \alpha'$ and $\nu_{i+1}=\beta\leq \beta'$. Finally, if $\beta=\nu_{i+1}=\mu_{i+2}=0$, then Property~${\mathcal P}_5$ and $\beta\leq \beta'$ imply that $\beta'$ is also equal to $0$, hence  Property~${\mathcal P}_6$ is satisfied by the new values as it is satisfied by $\varphi$. 
	\item Suppose that the previous case does not apply and that $\mathcal F_{i+1}$ contains the pair $(1,-1)$. Then, in $\varphi$, we can replace the value $\mu_{i+1}=\alpha'$ and $\nu_{i+1}=\beta'$ with $\mu_{i+1}=1$ and $\nu_{i+1}=-1$, leaving unaltered all other values and the in-out assignment, except for $G_{i+1}$ which is now assigned to $\textrm{out}$ -- previously it might have been assigned to $\textrm{in}$ or to $\textrm{out}$. The resulting set of values, together with the in-out assignment, satisfies Properties ${\mathcal P}_1$--${\mathcal P}_7$ of Lemma~\ref{le:structure}. In particular, Properties ${\mathcal P}_1$, ${\mathcal P}_2$, and ${\mathcal P}_7$ are not affected by the change of value for $\mu_{i+1}$ and $\nu_{i+1}$, while Property~${\mathcal P}_3$ is satisfied as $(1,-1)\in \mathcal F_{i+1}$. Property ${\mathcal P}_4$ is satisfied since we have $\mu_{i+1}=1=-\rho_{i}$ and $\nu_{i+1}=-1\leq -\rho_{i+1}$. Property ${\mathcal P}_5$ is satisfied since we have $\nu_i+\mu_{i+1}=-1+1=0$ and $\nu_{i+1}+\mu_{i+2}=-1+\mu_{i+2}\leq 0$. Finally, Property ${\mathcal P}_6$ is vacuously satisfied, since $\mu_{i+1}\neq 0$ and $\nu_{i+1}\neq 0$.
	\item Suppose that the previous cases do not apply, that $\mathcal F_{i+1}$ contains the pair $(1,0)$, and that $\mathcal F_{i+1}$ does not contain the pair $(-1,1)$. Then, in $\varphi$, we can replace the value $\mu_{i+1}=\alpha'$ and $\nu_{i+1}=\beta'$ with $\mu_{i+1}=1$ and $\nu_{i+1}=0$, leaving unaltered all other values and the in-out assignment. The resulting set of values, together with the in-out assignment, satisfies Properties ${\mathcal P}_1$--${\mathcal P}_7$ of Lemma~\ref{le:structure}. In particular, Properties ${\mathcal P}_1$, ${\mathcal P}_2$, and ${\mathcal P}_7$ are not affected by the change of value for $\mu_{i+1}$ and $\nu_{i+1}$, while Property~${\mathcal P}_3$ is satisfied as $(1,0)\in \mathcal F_{i+1}$. We now discuss Property ${\mathcal P}_4$. Because of the assumptions of this case, we have that $\mathcal F_{i+1}$ contains $(1,0)$, might contain $(0,1)$ and $(1,1)$, and does not contain any other pair. Since $\varphi$ satisfies Property ${\mathcal P}_3$, we have $\alpha'\geq 0$ and $\beta'\geq 0$. Since $\varphi$ satisfies Property ${\mathcal P}_4$, we have $\mathcal A(G_{i+1})=\textrm{out}$, hence $\mu_{i+1}=1=-\rho_i$ and $\nu_{i+1}=0\leq \beta' \leq -\rho_{i+1}$. Property ${\mathcal P}_5$ is satisfied since we have $\nu_i+\mu_{i+1}=-1+1=0$ and $\nu_{i+1}+\mu_{i+2}=0+\mu_{i+2}\leq \beta'+\mu_{i+2}\leq 0$, given that $\varphi$ satisfies Property ${\mathcal P}_5$. Finally, $\mu_{i+1}\neq 0$; further, if $\beta=\nu_{i+1}=\mu_{i+2}=0$, then Property~${\mathcal P}_5$ and $\beta\leq \beta'$ imply that $\beta'$ is also equal to $0$, hence  Property~${\mathcal P}_6$ is satisfied as it is satisfied by $\varphi$. 
	\item Suppose that the previous cases do not apply, that $\mathcal F_{i+1}$ contains the pair $(\alpha,1)$, with $\alpha \in \{-1,0\}$, and that $\mathcal F_{i+1}$ does not contain the pair $(1,0)$. By Lemma~\ref{le:simultaneous-values}, we have that $\mathcal F_{i+1}$ contains only one such a pair, and that $\alpha\leq \alpha'$. Because of the assumptions of this case, we have that $\mathcal F_{i+1}$ contains $(\alpha,1)$, might contain $(1,1)$, and does not contain any other pair. Hence, we have $\beta'=1$. Then, in $\varphi$, we can replace the value $\mu_{i+1}=\alpha'$ with $\mu_{i+1}=\alpha$, leaving unaltered all other values and the in-out assignment. Since $\mu_{i+1}$ does not increase, the proof that the resulting set of values, together with the in-out assignment, satisfies Properties ${\mathcal P}_1$--${\mathcal P}_7$ of Lemma~\ref{le:structure} is the same as in the first case.
	\item Suppose that the previous cases do not apply, and that $\mathcal F_{i+1}$ contains the pairs $(1,0)$ and $(-1,1)$. Because of the assumptions of this case, we have that $\mathcal F_{i+1}$ might contain $(0,1)$, $(1,1)$, and does not contain any other pair. Hence, $\beta'\geq 0$. We further distinguish three sub-cases.
	\begin{itemize}
		\item First, suppose that $v_{i+1}$ is a non-switch vertex in $\mathcal C$. By Property~${\mathcal P}_1$, we have $\rho_{i+1}=0$ in $\varphi$. By Property~${\mathcal P}_4$, we have that $\nu_{i+1}\leq 0$. Hence, we have $\alpha'=1$ and $\beta'=0$, thus we can pick the value $1$ for $\mu_{i+1}$ without loss of generality.
		\item Second, suppose that $v_{i+1}$ is a switch vertex in $\mathcal C$, that $G_{i+2}$ is defined, and that $\mathcal F_{i+2}$ contains a pair $(0,\cdot)$. By Lemma~\ref{le:simultaneous-values}, we have that $\mathcal F_{i+2}$ does not contain a pair $(-1,\cdot)$, hence $\mu_{i+2}\geq 0$ in $\varphi$.  By Property~${\mathcal P}_5$, we have that $\nu_{i+1}\leq 0$. Hence, we have $\alpha'=1$ and $\beta'=0$, thus we can pick the value $1$ for $\mu_{i+1}$ without loss of generality.
		\item Third, suppose that $v_{i+1}$ is a switch vertex in $\mathcal C$, and that $G_{i+2}$ is undefined or that $\mathcal F_{i+2}$ does not contain a pair $(0,\cdot)$. First, we observe that $\mu_{i+2}=-1$ in $\varphi$; indeed, $\mu_{i+2}=1$, together with $\nu_{i+1}=\beta'\geq 0$, would violate Property~${\mathcal P}_5$ for $\varphi$. Then, in $\varphi$, we can replace the value $\mu_{i+1}=\alpha'$ and $\nu_{i+1}=\beta'$ with $\mu_{i+1}=-1$ and $\nu_{i+1}=1$, leaving unaltered all other values and the in-out assignment. The resulting set of values, together with the in-out assignment, satisfies Properties ${\mathcal P}_1$--${\mathcal P}_7$ of Lemma~\ref{le:structure}. In particular, Properties ${\mathcal P}_1$, ${\mathcal P}_2$, and ${\mathcal P}_7$ are not affected by the change of value for $\mu_{i+1}$ and $\nu_{i+1}$, while Property~${\mathcal P}_3$ is satisfied as $(-1,1)\in \mathcal F_{i+1}$. We now discuss Property ${\mathcal P}_4$. Since $\beta'\geq 0$, we have that $\rho_{i+1}=1$ if $\mathcal A(G_{i+1})=\textrm{in}$ and $\rho_{i+1}=-1$ if $\mathcal A(G_{i+1})=\textrm{out}$; since $\mathcal A(G_{i+1})$ does not change, we have that the new value $\nu_{i+1}=1$ is equal to $\rho_{i+1}$ if $\mathcal A(G_{i+1})=\textrm{in}$ and to $-\rho_{i+1}$ if $\mathcal A(G_{i+1})=\textrm{out}$. Property ${\mathcal P}_5$ is satisfied since we have $\nu_i+\mu_{i+1}=-1-1<0$ and $\nu_{i+1}+\mu_{i+2}=1-1=0$. Finally, Property ${\mathcal P}_6$ is vacuously satisfied, since $\mu_{i+1}\neq 0$ and $\nu_{i+1}\neq 0$.
	\end{itemize} 

	\item Finally, if neither of the previous cases applies, $\mathcal F_{i+1}$ can only contain the pair $(1,1)$, and in fact it has to contain it, given that $\mathcal F_{i+1}$ is not empty. By Property~${\mathcal P}_3$, we can pick the value $1$ for $\mu_{i+1}$ without loss of generality.
\end{itemize}
This concludes the proof of the lemma.
\end{proof}

By means of Lemmata~\ref{le:from-mu-to-nu} and~\ref{le:from-nu-to-mu}, we can assume to have decided without loss of generality the values $\mu_1,\dots,\mu_k,\nu_1,\dots,\nu_k$. Actually, when the value of $\nu_{k-1}$ is decided, we also need to check whether Property~${\mathcal P}_5$  of Lemma~\ref{le:structure} is satisfied by $\nu_{k-1}$ and $\mu_k$, as the value of $\mu_k$ was decided when fixing the current candidate tuple. The procedure described in Lemmata~\ref{le:from-mu-to-nu} and~\ref{le:from-nu-to-mu} takes $\bigoh(k)$ time, namely $\bigoh(1)$ time per fixed value. By Property~${\mathcal P}_1$, we can also fix the value $\rho_i=0$ for every vertex $v_i$ which is a non-switch vertex in $\mathcal C$. It remains to fix the value $\rho_i$ for every vertex $v_i$ which is a switch vertex in $\mathcal C$ (to $-1$ or $+1$, again by Property~${\mathcal P}_1$), and to decide the in-out assignment $\mathcal A$. The in-out assignments of some components are bound together, as in the following lemma. 

\begin{lemma} \label{le:binding}
Suppose that, for some $i\in \{1,\dots,k-1\}$, the values $\nu_i$ and $\mu_{i+1}$ have been both fixed to $0$. Then, if $v_i$ is a switch vertex in $\mathcal C$ we have $\mathcal A(G_i)=\mathcal A(G_{i+1})$, otherwise we have $\mathcal A(G_i)\neq \mathcal A(G_{i+1})$.  	
\end{lemma}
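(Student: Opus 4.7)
The plan is to extract the conclusion directly from Properties $\mathcal P_1$, $\mathcal P_4$, and $\mathcal P_6$ of Lemma~\ref{le:structure}. I would first observe that the hypothesis $\nu_i = 0$ forces a sign constraint on $\rho_i$ depending on the in-out assignment of $G_i$: combining $\nu_i = 0$ with $\mathcal P_4$, if $\mathcal A(G_i) = \textrm{in}$ then $\rho_i \geq 0$, while if $\mathcal A(G_i) = \textrm{out}$ then $\rho_i \leq 0$. The symmetric analysis applied to $\mu_{i+1} = 0$ yields parallel constraints on $\rho_i$ coming from $\mathcal A(G_{i+1})$. Note that the hypothesis implicitly requires $G_i$ and $G_{i+1}$ to be defined, as otherwise $\nu_i$ or $\mu_{i+1}$ would be $-1$ (by the procedure in Lemma~\ref{le:from-mu-to-nu} or the preliminary fixing).

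I would then split into two cases based on whether $v_i$ is a switch vertex in $\mathcal C$. If $v_i$ is a switch vertex, then by $\mathcal P_1$ we have $\rho_i \in \{-1, +1\}$, so the sign constraints derived above become forced choices: $\rho_i = +1$ forces $\mathcal A(G_i) = \mathcal A(G_{i+1}) = \textrm{in}$, while $\rho_i = -1$ forces $\mathcal A(G_i) = \mathcal A(G_{i+1}) = \textrm{out}$. In either subcase, $\mathcal A(G_i) = \mathcal A(G_{i+1})$.

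If instead $v_i$ is a non-switch vertex in $\mathcal C$, then $\mathcal P_1$ gives $\rho_i = 0$. Here I would invoke $\mathcal P_6$ directly: were $\mathcal A(G_i) = \mathcal A(G_{i+1}) = \textrm{in}$, then $\mathcal P_6$ would demand $\rho_i = 1$, contradicting $\rho_i = 0$; analogously, $\mathcal A(G_i) = \mathcal A(G_{i+1}) = \textrm{out}$ would demand $\rho_i = -1$, again a contradiction. Thus $\mathcal A(G_i) \neq \mathcal A(G_{i+1})$, which completes the case analysis.

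The argument is essentially a bookkeeping check on the properties of Lemma~\ref{le:structure}, so I do not foresee a substantive obstacle; the only subtle point is confirming that both $G_i$ and $G_{i+1}$ are indeed defined under the hypothesis, which follows from how the values $\mu_j,\nu_j$ are fixed earlier. The real use of this lemma, I expect, comes later when the in-out assignment is encoded as a \partition instance: adjacent subgraphs whose interface angles are both flat must either be co-assigned or anti-assigned, producing ``rigid groups'' whose contributions must be balanced between the two sides of $\mathcal C$.
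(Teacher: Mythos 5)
Your proof is correct and follows essentially the same route as the paper: the switch-vertex case is handled by combining $\nu_i=\mu_{i+1}=0$ with Property $\mathcal P_4$ and $\rho_i\in\{-1,1\}$, and the non-switch case by $\mathcal P_1$ (giving $\rho_i=0$) together with $\mathcal P_6$. Your additional remark that $G_i$ and $G_{i+1}$ must be defined under the hypothesis is a harmless bookkeeping point the paper leaves implicit.
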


\begin{proof}
The first part of the statement comes from the fact that if $v_i$ is a switch vertex in $\mathcal C$, then $\{\rho_i,-\rho_i\} =\{-1,1\}$, hence Property~${\mathcal P}_4$ of Lemma~\ref{le:structure} is violated if $G_i$ and $G_{i+1}$ are assigned to different sides of $\mathcal C$. The second part of the statement comes from the fact that if $v_i$ is a non-switch vertex in $\mathcal C$, then $\rho_i=-\rho_i=0$, hence Property~${\mathcal P}_6$ is violated if $G_i$ and $G_{i+1}$ are assigned to the same side of $\mathcal C$.
\end{proof}	

If we consider a maximal sequence of components $G_p,G_{p+1},\dots,G_q$ such that any two consecutive components are bound together, deciding whether $G_p$ is embedded inside or outside $\mathcal C$ decides whether each of the components $G_p,G_{p+1},\dots,G_q$ is embedded inside or outside $\mathcal C$, via repeated applications of Lemma~\ref{le:binding}. 

Furthermore, the values $\rho_{p},\rho_{p+1},\dots,\rho_{q-1}$ are also determined by the same decision, by Property~${\mathcal P}_6$: If $G_i$ and $G_{i+1}$ both lie inside $\mathcal C$, then $\rho_i=1$, while if they both lie outside $\mathcal C$, then $\rho_i=-1$ (while if they lie one inside and one outside $\mathcal C$, then $v_i$ is a switch vertex in $\mathcal C$ and $\rho_i$ has been already fixed to $0$). The values $\rho_{p-1}$ and $\rho_{q}$ might also be determined by the choice of whether $G_p$ is embedded inside or outside $\mathcal C$, if $\mu_p\geq 0$ and if $\nu_q\geq 0$, respectively, by Property~${\mathcal P}_4$. Thus, each maximal sequence of components $G_p,G_{p+1},\dots,G_q$ such that any two consecutive components are bound together gives us a (not necessarily positive) integer, which is the sum of all the $\rho_i$'s that are determined by the choice of whether $G_p$ is embedded inside or outside $\mathcal C$. Note that distinct maximal sequences of components are independently embedded inside or outside $\mathcal C$. By Property~${\mathcal P}_2$, the sum of the positive integers that are obtained in this way has to coincide with the sum of the negative integers, up to a constant additive term. This is where \partition shows up to help.

\begin{figure}[htb]
	\centering
	\includegraphics{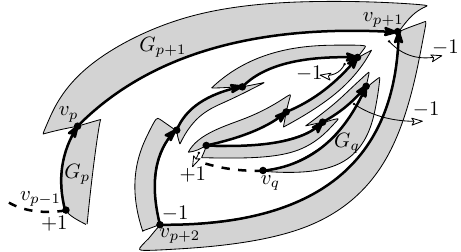}
	\caption{Generation of an integer $a_p$ from components $G_p,G_{p+1},\dots,G_q$ that are bound together. The thick line is part of the cycle $\mathcal C$. Subgraphs $G_i$ are schematized. Contributions of the values $\rho_i$ to $a_p$ are shown. In this example, $a_p=-2$. Note that, in this example, the value of $a_p$ is increased by one unit by $\rho_{p-1}$, since $v_{p-1}$ is a switch vertex in $\mathcal C$ and $\mu_p\geq 0$, while the value of $a_p$ is not altered by $\rho_{q}$, since $\nu_q=-1$.}
	\label{fig:number-generation}
\end{figure}

We now refine this argument; see Fig.~\ref{fig:number-generation}. We are going to define a multiset $\mathcal S$ of positive integers. Initially, we have $\mathcal S=\emptyset$. Consider every maximal sequence of components $G_p,G_{p+1},\dots,G_q$ such that, for $i=p,p+1,\dots,q-1$, the values $\nu_i$ and $\mu_{i+1}$ have both been fixed to $0$. Initialize an integer $a_p$ to $0$ and a variable $c$ to $\textrm{in}$; this variable only assumes values in $\{\textrm{in},\textrm{out}\}$. If $\mu_{p}>=0$ and $v_{p-1}$ is a switch vertex in $\mathcal C$, then increase $a_p$ by $1$ and \emph{mark} $v_{p-1}$ -- thus taking note of the fact that the value $\rho_{p-1}$ has already contributed to generate some integer. Then, for $i=p,\dots,q-1$, proceed as follows. If $v_i$ is a non-switch vertex in $\mathcal C$, then change the value of $c$. If $v_i$ is a switch vertex in $\mathcal C$, then increase the value of $a_p$ by $1$ if $c=\textrm{in}$ and decrease the value of $a_p$ by $1$ if $c=\textrm{out}$; in both cases, mark~$v_i$. Finally, if $\nu_q\geq 0$ and $v_q$ is a switch vertex in $\mathcal C$, then increase the value of $a_p$ by $1$ if $c=\textrm{in}$ and decrease the value of $a_p$ by $1$ if $c=\textrm{out}$; in both cases, mark~$v_q$. Insert into $\mathcal S$ the absolute value of~$a_p$. Henceforth, this process inserts a positive integer into $\mathcal S$ for every maximal sequence of components $G_p,G_{p+1},\dots,G_q$ that are bound together  -- this is true also for sequences consisting of a single component $G_p$. We also insert into $\mathcal S$ a number $1$ for each unmarked switch vertex in $\mathcal C$. In the resulting multiset $\mathcal S$ of integers, there are up to four \emph{special} integers, the ones corresponding to $v_0$, $G_1$, $v_k$, and $G_k$. These are special because, in the current candidate tuple, the values of $\rho_0$ and $\rho_k$ are already decided, as well as the in-out assignment for $G_1$ and $G_k$. The special integers might be less than four. For example, if $v_0$ is a non-switch vertex in $\mathcal C$, it does not generate any integer; as another example, if $v_0$ is marked when considering $G_1$, then their corresponding integer is the same. 

Let $2\ell\leq k$ be the number of switch vertices in $\mathcal C$. Property~${\mathcal P}_2$ of Lemma~\ref{le:structure} requires that the sum of the $\rho_i$'s is equal to $-2$. Because of the above construction, this is possible if and only if the integers in $\mathcal S$ can be partitioned into two sets $\mathcal S_{\textrm{in}}$ and $\mathcal S_{\textrm{out}}$ such that the sum of the elements in $\mathcal S_{\textrm{in}}$ is $\ell-1$ and the sum of the elements in $\mathcal S_{\textrm{out}}$ is $\ell+1$. Special integers are removed from $\mathcal S$ and used to decrease the value $\ell-1$ or $\ell+1$. For example, if $\rho_0=1$ in the current candidate tuple and $v_0$ is not marked, then the integer $a_0=1$ is required to be assigned to $\mathcal S_{\textrm{in}}$, hence we delete it from $\mathcal S$ and we decrease the target sum for the elements in  $\mathcal S_{\textrm{in}}$ to $\ell-2$. After the $\bigoh(1)$ special integers are considered, we are left with a set of integers $\mathcal S$ which needs to be partitioned into two sets $\mathcal S_{\textrm{in}}$ and $\mathcal S_{\textrm{out}}$ whose elements need to sum up to some values $\ell_{\textrm{in}}$ and $\ell_{\textrm{out}}$, respectively. If $\ell_{\textrm{in}}\neq \ell_{\textrm{out}}$, this is not exactly an instance of \partition, hence consider a new integer $a^*=|\ell_{\textrm{in}}-\ell_{\textrm{out}}|$ and we aim to partition $\mathcal S\cup \{a^*\}$ into two sets so that the sum of the elements in each set is the same. Clearly, this is possible if and only if $\mathcal S$ can be partitioned into two sets $\mathcal S_{\textrm{in}}$ and $\mathcal S_{\textrm{out}}$ such that the sum of the elements in $\mathcal S_{\textrm{in}}$ is $\ell_{\textrm{in}}$ and the sum of the elements in $\mathcal S_{\textrm{out}}$ is $\ell_{\textrm{out}}$. Thus, we can solve the problem using the $\bigoh(f(\ell))\in \bigoh(f(k))$ algorithm for \partition. 

The algorithm described in this section gives us the following.

\begin{lemma} \label{le:compute-mu-nu}
Assume that the feasible set $\mathcal F_i$ of each subgraph $G_i$ of $G$ is known. Then the feasible set $\mathcal F$ of $G$ can be computed in $\bigoh(f(k))$ time. 
\end{lemma}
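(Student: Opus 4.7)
The plan is to assemble the components developed throughout this section into a single running-time analysis, with correctness following from the earlier lemmas. Since $\mathcal F \subseteq \{-1,0,1\}^2$, it suffices to test in $\bigoh(f(k))$ time, for each of the nine candidate pairs $(\mu,\nu)$, whether $(\mu,\nu) \in \mathcal F$; by Lemma~\ref{le:structure}, this is equivalent to deciding the existence of values $\rho_0,\dots,\rho_k$, $\mu_1,\dots,\mu_k$, $\nu_1,\dots,\nu_k$ in $\{-1,0,1\}$ and an in-out assignment $\mathcal A$ satisfying Properties $\mathcal P_1$--$\mathcal P_7$.

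For each such pair $(\mu,\nu)$, I would enumerate the $\bigoh(1)$ candidate tuples $\langle \rho_0, \rho_k, \mu_1, \mathcal A(G_1), \mu_k, \nu_k, \mathcal A(G_k)\rangle$, discarding in constant time any that violate Property~$\mathcal P_1$, $\mathcal P_3$, $\mathcal P_4$, or $\mathcal P_7$. For each surviving tuple, I would apply Lemmata~\ref{le:from-mu-to-nu} and~\ref{le:from-nu-to-mu} alternately to fix, without loss of generality and in $\bigoh(1)$ time each, the values $\nu_1, \mu_2, \nu_2, \dots, \mu_{k-1}, \nu_{k-1}$, checking compatibility of the final $\nu_{k-1}$ with the already-fixed $\mu_k$ via Property~$\mathcal P_5$. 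This stage costs $\bigoh(k)$ time per candidate tuple.

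It then remains to fix the $\rho_i$ at switch vertices of $\mathcal C$ (non-switch vertices already have $\rho_i=0$ by Property~$\mathcal P_1$) and the in-out assignment of the interior subgraphs, subject to Property~$\mathcal P_2$. Invoking Lemma~\ref{le:binding}, I would group the components $G_i$ into maximal sequences that are bound together, observe that the in/out choice for the first member of each sequence propagates through the whole sequence and determines every $\rho_i$ at its internal (and possibly boundary) switch vertices, and encode each sequence by a single positive integer equal to the absolute value of the signed sum of its determined $\rho_i$'s; an extra $1$ is inserted for each switch vertex of $\mathcal C$ left unmarked. Removing the $\bigoh(1)$ special integers associated with $v_0$, $G_1$, $v_k$, $G_k$ and folding them into the target sums, then padding the result with a single element $a^* = |\ell_{\textrm{in}} - \ell_{\textrm{out}}|$, reduces Property~$\mathcal P_2$ to an equivalent instance of \partition that can be built in $\bigoh(k)$ time.

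The main point to verify, and the only one not immediate from the already-proved lemmas, is that the total sum of the integers in the constructed multiset is $\bigoh(k)$ (in fact at most $2\ell \leq k$, since only switch vertices of $\mathcal C$ contribute), so the \partition instance has size $\bigoh(k)$ and is solvable in $\bigoh(f(k))$ time by hypothesis. Summing $\bigoh(1)$ pairs $\times$ $\bigoh(1)$ candidate tuples $\times$ $\bigl(\bigoh(k) + \bigoh(f(k))\bigr)$, and using $f(k)\in\Omega(k)$ to absorb the preprocessing, yields the claimed $\bigoh(f(k))$ bound. Correctness follows from Lemma~\ref{le:structure} combined with the ``without loss of generality'' guarantees of Lemmata~\ref{le:from-mu-to-nu},~\ref{le:from-nu-to-mu}, and~\ref{le:binding}.
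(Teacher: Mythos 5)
Your proposal is correct and follows essentially the same route as the paper, whose "proof" of this lemma is precisely the algorithm developed in the preceding part of the section: enumerate the nine pairs and the $\bigoh(1)$ candidate tuples, fix $\mu_1,\nu_1,\dots,\nu_{k-1}$ via Lemmata~\ref{le:from-mu-to-nu} and~\ref{le:from-nu-to-mu} in $\bigoh(k)$ time, group bound components via Lemma~\ref{le:binding}, and reduce the remaining choice of the $\rho_i$'s and the in-out assignment to a \partition instance whose elements sum to $\bigoh(\ell)\subseteq\bigoh(k)$. Your explicit observation that the constructed multiset sums to at most $2\ell\le k$, and the use of $f(k)\in\Omega(k)$ to absorb the preprocessing, match the paper's accounting.
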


With Lemma~\ref{le:compute-mu-nu} at hand, the algorithm to test whether $G$ admits an upward plane embedding with a given edge $e$ on the outer face, where $e$ is incident to $f_{\mathcal O}$, follows a natural recursive strategy described below.

\begin{figure}[htb]
	\centering
	\includegraphics[scale=0.7]{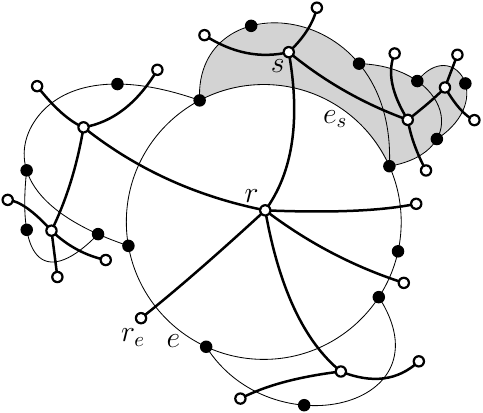}
	\caption{The extended dual tree $\mathcal T$ of an outerplane embedding $\mathcal O$ of $G$; the tree $\mathcal T$ is represented by white disks and thick lines. The subgraph $G_s$ of $G$ corresponding to the subtree $\mathcal T_s$ of $\mathcal T$ rooted at a node $s$ has its internal faces colored gray.}
	\label{fig:extended-dual}
\end{figure}

First, we compute the outerplane embedding $\mathcal O$ of $G$ in $\bigoh(n)$ time~\cite{cnao-lta-85,ht-ept-74,m-laarogmog-79,w-rolt-87}. Then we construct the \emph{extended dual tree} $\mathcal T$ of $\mathcal O$; see Fig.~\ref{fig:extended-dual}. This has a vertex $s$ for each cycle $\mathcal C_s$ delimiting an internal face of $\mathcal O$ and has a vertex for each edge incident to $f_{\mathcal O}$. Two vertices $s$ and $t$ of $\mathcal T$ corresponding to cycles $\mathcal C_s$ and $\mathcal C_t$ are adjacent in $\mathcal T$ if $\mathcal C_s$ and $\mathcal C_t$ share an edge, and a vertex of $\mathcal T$ corresponding to an edge $x$ is only adjacent to the vertex of $\mathcal T$ corresponding to the unique cycle delimiting an internal face of $\mathcal O$ the edge $x$ belongs to. We root $\mathcal T$ at the leaf $r_e$ corresponding to $e$. Then the subtree $\mathcal T_s$ of $\mathcal T$ rooted at an internal node $s$ of $\mathcal T$ corresponds to a subgraph $G_s$ of $G$, which is the union of the cycles corresponding to the vertices in $\mathcal T_s$. 

We now perform a bottom-up visit of $\mathcal T$ which ends after the child $r$ of $r_e$ is visited. When processing an internal node $s$ of $\mathcal T$ with parent $p$, we apply Lemma~\ref{le:compute-mu-nu} in order to compute the feasible set of $G_s$. We denote this feasible set by $\mathcal F_{s}$ and remark that the $(\mu,\nu)$-embeddings of $G_s$ are such that the edge $e_s$ shared by $\mathcal C_s$ and $\mathcal C_p$ -- this is the edge $e$ if $s=r$ -- is incident to the outer face. By Lemma~\ref{le:compute-mu-nu}, it takes $\bigoh(f(k_s))$ time to process $s$, where $k_s$ is the number of vertices  of $\mathcal C_s$. This sums up to $\bigoh(f(n))$ time over the entire bottom-up visit. If some computed feasible set $\mathcal F_{s}$ is empty, then by Property~$\mathcal P_3$ of Lemma~\ref{le:structure} the feasible set of every node in the path from $s$ to $r$ in $\mathcal T$ is empty, hence $G$ admits no upward plane embedding in which $e$ is incident to the outer face. Conversely, if a non-empty feasible set is computed for the child $r$ of the root $r_e$, then it is concluded that $G$ admits an upward plane embedding in which $e$ is incident to the outer face. The described algorithm can also be suitably modified so that, in case $G$ admits an upward plane embedding in which $e$ is incident to the outer face, it constructs such an embedding. In order to do that, whenever we add a pair $(\mu,\nu)$ to a feasible set $\mathcal F_{s}$ during the bottom-up visit of $\mathcal T$, we also associate to the pair the $\bigoh(k_s)$ values $\rho_0, \dots,\rho_k,\mu_1,\dots,\mu_k,\nu_1,\dots,\nu_k$ and the in-out assignment $\mathcal A$ that satisfy Properties~$\mathcal P_1$--$\mathcal P_7$ of Lemma~\ref{le:structure}. This information can be then exploited in a top-down traversal of $\mathcal T$ in order to compute a $(\mu,\nu)$-embedding of $G$. This concludes the proof of Theorem~\ref{th:upward-to-partition}.

\section{Conclusions} \label{se:conclusions}

In this paper, we have established a connection between the problem of testing the upward planarity of an $n$-vertex biconnected outerplanar DAG and a famous complexity theory problem, called \partition, which asks whether a set of positive integers admits a partition into two sets so that the sum of the elements in each set is equal to $n/2$. Our main result is that, if the upward planarity of an $n$-vertex biconnected outerplanar DAG can be tested in $f(n)$ time, then \partition can also be solved in $\bigoh(f(n))$ time. The time complexity of \partition, when expressed solely as a function of $n$, is known to be between $\Omega(n)$ and $\bigoh(n\log^2 n)$. Any super-linear lower bound for the \partition problem would hence imply a super-linear lower bound for testing the upward planarity of an $n$-vertex biconnected outerplanar DAG; this would be, to the best of our knowledge, the first super-linear lower bound for a graph drawing problem not involving real numbers. Conversely, a linear-time algorithm  for testing the upward planarity of an $n$-vertex biconnected outerplanar DAG would imply a linear-time algorithm for \partition, which would be a result improving upon the outcomes of decades of research.

We have also shown a result in the other direction. Namely, suppose that \partition can be solved in $f(n)$ time. Let $G$ be an $n$-vertex biconnected outerplanar DAG and $e$ be a prescribed edge incident to the outer face of the outerplane embedding of $G$. Then it can be tested in $\bigoh(f(n))$ time whether $G$ with admits an upward planar drawing in which $e$ is incident to the outer face. A result of Koiliaris and Xu~\cite{DBLP:journals/talg/KoiliarisX19} implies that such a function $f(n)$ exists with $f(n)\in \bigoh(n\log^2 n)$. On the one hand, the constraint that $e$ is incident to the outer face of the outerplane embedding $\mathcal O$ of $G$ is not a loss of generality per se, as we can prove that if a biconnected outerplanar DAG admits an upward plane embedding, it also admits an upward plane embedding $\mathcal E$ in which an edge incident to the outer face of $\mathcal O$ is incident to the outer face of $\mathcal E$. On the other hand, it would be nice to remove entirely the constraint of having a prescribed edge incident to the outer face and, possibly, to remove the assumption of biconnectivity. It is often the case that an algorithm to test some planarity variant (e.g., upward planarity or rectilinear planarity) with the constraint of having a prescribed edge that is required to be incident to the outer face can be used as a subroutine by an algorithm to solve the same problem, within the same time bound and without the constraint on the prescribed edge. A general methodology for doing that has been recently introduced~\cite{dlop-ood-20} and already used successfully for problems related to ours~\cite{DBLP:conf/gd/ChaplickGFGRS22,DBLP:journals/comgeo/Frati22}. However, the techniques introduced in~\cite{dlop-ood-20} do not seem to apply straight-forwardly to extend our algorithm so to deal with general biconnected and connected outerplanar DAGs.

We conclude by remarking that our paper shows a substantial difference between rectilinear planarity testing and upward planarity testing, two problems that are often tied together. Namely, while it can be tested in linear time whether an outerplanar graph admits a rectilinear planar drawing~\cite{DBLP:journals/comgeo/Frati22}, the results of Section~\ref{se:from-partition-to-upward} provide evidence that finding a linear-time algorithm for testing the upward planarity of an outerplanar DAG might be an elusive goal. Interestingly, both the linear-time algorithm for testing rectilinear planarity~\cite{DBLP:journals/comgeo/Frati22} and the algorithm we proposed in Section~\ref{se:from-upward-to-partition} exploit a bottom-up approach on the extended dual tree, in which the computation at each node of the tree mainly consists of deciding whether each graph component that is attached to the cycle corresponding to the node should be embedded inside or outside such a cycle. However, this choice has a different nature in the two problems. Indeed, in the rectilinear planarity testing problem, choices performed for distinct components are independent from one another. Conversely, in upward planarity testing, the constraints imposed by Theorem~\ref{th:upward-conditions} bound together some components in the inside/outside decision. This bond between components generates integers: Each integer corresponds to the number of times the cycle has to ``roll up'' in order to accommodate components that are bound together and that all lie inside (or all lie outside) the cycle. Theorem~\ref{th:upward-conditions} requires to balance the sum of the integers corresponding to components lying inside the cycle with the sum of the integers corresponding to components lying outside the cycle. This is where the connection to \partition arises, which makes upward planarity seemingly harder to test than rectilinear planarity.


		
	\bibliographystyle{splncs04}
	\bibliography{bibliography}
	

	\end{document}